\def\bstr{b}
\def\bfstr{bf}
\def\cstr{c}
\def\fstr{f}
\def\strLst{A,B,C,D,d,E,F,G,H,I,J,K,L,M,N,O,P,Q,R,S,T,U,V,W,X,Y,Z}
\newcommand{\MkB}[1]{\expandafter\def\csname\bstr#1\endcsname{\mathbb{#1}}}
\strLst\do{%
    \expandafter\MkB \i     }
\newcommand{\MkBF}[1]{\expandafter\def\csname\bfstr#1\endcsname{\mathbf{#1}}}
\strLst\do{%
    \expandafter\MkBF \i     }
\newcommand{\MkCal}[1]{\expandafter\def\csname\cstr#1\endcsname{\mathcal{#1}}}
\strLst\do{%
    \expandafter\MkCal \i     }
\newcommand{\MkFrak}[1]{\expandafter\def\csname\fstr#1\endcsname{\mathfrak{#1}}}
\strLst\do{%
    \expandafter\MkFrak \i     }
\newcommand{\Lin}[1]{\mathop{\mathsf{Lin}}(#1)}
\newcommand{\LinAc}[1]{\overline{\mathsf{Lin}}(#1)}
\newcommand{\LinEq}[1]{\overline{\mathsf{Lin}}(#1)_{\sim}}
\newcommand{\ac}[1]{\mathsf{#1}} 
\newcommand{\Shift}{\mathsf{Shift}}
\newcommand{\Trans}{\mathsf{Trans}}
\newcommand{\MatchGT}[3]{\mathsf{M}^{{\text{\tiny $#1$}}}_{#2}(#3)}
\newcommand{\RMatchGT}[3]{\mathcal{M}^{{\text{\tiny $#1$}}}_{#2}(#3)}
\newcommand{\obs}[1]{\mathsf{obs}(#1)}
\newcommand{\pB}[1]{\mathsf{PB}(#1)}
\newcommand{\mono}[1]{\mathsf{mono}(#1)}
\newcommand{\epi}[1]{\mathsf{epi}(#1)}
\newcommand{\mor}[1]{\mathsf{mor}(#1)}
\newcommand{\iso}[1]{\mathsf{iso}(#1)}
\newcommand{\obj}[1]{\mathsf{obj}(#1)}
\newcommand{\mIO}{\mathop{\varnothing}}
\newcommand{\canRep}[2]{\overline{\rho}^{#1}_{\bfC}\left(#2\right)}
\newcommand{\bra}[1]{\left\langle #1\right\vert}
\newcommand{\ket}[1]{\left\vert #1\right\rangle}
\newcommand{\braket}[2]{\left\langle \left. #1 \right\vert #2\right\rangle}
\newcommand{\MOD}{\textsc{M\O{}D}}
\newcommand{\KAP}{\textsc{Kappa}}
\newcommand{\compGT}[4]{#2 {}^{#3}\!{\triangleleft}_{#1} #4}
\newcommand{\rap}[3]{#2 \star_{#1}{#3}}
\renewcommand{\vec}[1]{\underline{#1}}
\newcommand{\cond}[1]{\mathsf{cond}(#1)}
\newcommand{\jcOp}[1]{\hat{\bO}(#1)}
\newcommand{\Prob}[1]{\mathsf{Prob}(#1)}
\spnewtheorem{assumption}{Assumption}{\bfseries}{\itshape}
\newcommand{\inputtikz}[1]{%
 \vcenter{\hbox{\includegraphics{diagrams/#1.pdf}}}%
}
\colorlet{h1color}{blue!70!black} 
\colorlet{h2color}{orange!90!black} 
\colorlet{h3color}{blue!40!white} 
\colorlet{h4color}{green!40!black} 
\begin{document}
\title{Rewriting Theory for the Life Sciences:\\A Unifying Theory of CTMC Semantics\thanks{This is an extended version (containing additional technical appendices) of a paper with the same tittle accepted for \href{https://staf2020.hvl.no/events/icgt2020/}{ICGT~2020}.}}
\titlerunning{Rewriting Theory for the Life Sciences}
\author{Nicolas Behr\inst{1}\,\textsuperscript{\faEnvelopeO} 
\and Jean Krivine\inst{2}}
\authorrunning{N. Behr and J. Krivine}
%
\institute{Center for Research and Interdisciplinarity (CRI)\\
Universit\'{e} de Paris, INSERM U1284\\
8-10 Rue Charles V, 75004 Paris, France\\
\email{nicolas.behr@cri-paris.org}
\and
Institut de Recherche en Informatique Fondamentale (IRIF)\\
Universit\'{e} de Paris, CNRS UMR 8243\\
8 Place Aur\'{e}lie Nemours, 75205 Paris Cedex 13, France\\
\email{jean.krivine@irif.fr}}
\maketitle
\begin{abstract}
The Kappa biochemistry and the M\O{}D organo-chemistry frameworks are amongst the most intensely developed applications of rewriting theoretical methods in the life sciences to date. %
A typical feature of these types of rewriting theories is the necessity to implement certain structural constraints on the objects to be rewritten (a protein is empirically found to have a certain signature of sites, a carbon atom can form at most four bonds, \ldots). %
In this paper, we contribute to the theoretical foundations of these types of rewriting theory a number of conceptual and technical developments that permit to implement a universal theory of continuous-time Markov chains (CTMCs) for stochastic rewriting systems. Our core mathematical concepts are a novel rule algebra construction for the relevant setting of rewriting rules with conditions, both in Double- and in Sesqui-Pushout semantics, augmented by a suitable stochastic mechanics formalism extension that permits to derive dynamical evolution equations for pattern-counting statistics. %
\keywords{Double-Pushout rewriting \and Sesqui-pushout rewriting \and rule algebra %
\and stochastic mechanics \and biochemistry \and organic chemistry.}
\end{abstract}

\section{Motivation}

One of the key applications that rewriting theory may be considered for in the life sciences is the theory of continuous-time Markov chains (CTMCs) modeling complex systems. In fact, since Delbr\"{u}ck's seminal work on autocatalytic reaction systems in the 1940s~\cite{Delbr_ck_1940}, the mathematical theory of chemical reaction systems has effectively been formulated as a rewriting theory in disguise, namely via the rule algebra of discrete graph rewriting~\cite{bp2019-ext}. In the present paper, we provide the necessary technical constructions in order to consider the CTMCs and analysis methods of relevance for more general types of compositional rewriting theories with conditions, with key examples provided in the form of  \emph{biochemical graph rewriting} in the sense of the \KAP{} framework (\url{https://kappalanguage.org})~\cite{Boutillier:2018aa}, and \emph{(organo-) chemical graph rewriting} in the sense of the \MOD{} framework (\url{https://cheminf.imada.sdu.dk/mod/})~\cite{Andersen_2016}. The present paper aims to serve two main purposes: the first consists in providing an extension of the existing category-theoretical rule-algebra frameworks~\cite{bp2018,bp2019-ext,nbSqPO2019} by the rewriting theoretical design feature of incorporating rules with conditions as well as constraints on objects (Section~\ref{sec:ra}). Based upon these technical developments, we then investigate to which extent it is possible to utilize the rule-algebraic \emph{stochastic mechanics frameworks} of the relevant types (Section~\ref{sec:stochMech}) in order to compute evolution equations for the moments of pattern-count observables within the \KAP{} and \MOD{} frameworks (Section~\ref{sec:bcgr} and~\ref{sec:ocgr}).

\section{Compositional rewriting theories with conditions}

The well-established \emph{Double-Pushout (DPO)}~\cite{ehrig:2006fund} and \emph{Sesqui-Pushout (SqPO)}~\cite{Corradini_2006} frameworks for rewriting systems over categories with suitable adhesivity properties~\cite{lack2005adhesive,ehrig2004adhesive,GABRIEL_2014,ehrig2014mathcal} provide a principled and very general  foundation for rewriting theories. In practice, many applications require the rewriting of objects that are not part of an adhesive category themselves, but which may be obtained from a suitable ``ambient'' category via the notion of \emph{conditions} on objects. Together with a corresponding notion of constraints on rewriting rules, this yields a versatile extension of rewriting theory. In the DPO setting, this modification had been well-known~\cite{habel2009correctness,ehrig:2006fund,ehrig2014mathcal,ehrig2012m}, while it has been only very recently introduced for the SqPO setting~\cite{behrRaSiR}. For the \emph{rule algebra} constructions presented in the main part of this contribution, we require in addition a certain \emph{compositionality} property of our rewriting theories (established for the DPO case in~\cite{bp2018,bp2019-ext}, for the SqPO case in~\cite{nbSqPO2019}, and for both settings augmented with conditions in~\cite{behrRaSiR}). 

\subsection{Category-theoretical prerequisites}

We collect in Appendix~\ref{sec:MACapp} some of the salient concepts on $\cM$-adhesive categories and the relevant notational conventions. Throughout this paper, we will make the following assumptions:
\begin{assumption}\label{as:main}
    $\bfC\equiv(\bfC,\cM)$ is a finitary $\cM$-adhesive category with $\cM$-initial object, $\cM$-effective unions and epi-$\cM$-factorization. In the setting of \emph{Sesqui-Pushout (SqPO) rewriting}, we assume in addition that all final pullback complements (FPCs) along composable pairs of $\cM$-morphisms exist, and that $\cM$-morphisms are stable under FPCs.
\end{assumption}

Both of the main application examples presented within this paper rely upon typed variants of undirected multigraphs. 
\begin{definition}
    Let $\cP^{(1,2)}:\mathbf{Set}\rightarrow \mathbf{Set}$ be the restricted powerset functor (mapping a set $S$ to the set of its subsets $P\subset S$ with $1\leq |P|\leq 2$). The category $\mathbf{uGraph}$~\cite{bp2019-ext} of \emph{finite undirected multigraphs} is defined as the finitary restriction of the comma category $(ID_{\mathbf{Set}},\cP^{(1,2)})$. Thus an undirected multigraph is specified as $G=(E_G,V_G,i_G)$, where $E_G$ and $V_G$ are (finite) sets of edges and vertices, respectively, and where $i_G:E_G\rightarrow \cP^{(1,2)}(V_G)$ is the edge-incidence map.
\end{definition}

\begin{theorem}
    $\mathbf{uGraph}$ satisfies Assumption~\ref{as:main}, both for the DPO- and for the extended SqPO-variant.
\end{theorem}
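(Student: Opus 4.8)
The plan is to verify each property in Assumption~\ref{as:main} for $\mathbf{uGraph}$ by reducing it, wherever possible, to already-known structural facts about comma categories and presheaf-like constructions. The category $\mathbf{uGraph}$ is defined as a finitary restriction of the comma category $(ID_{\mathbf{Set}},\cP^{(1,2)})$, so my first move is to fix the distinguished morphism class $\cM$ to be the componentwise monomorphisms (injective on both edges and vertices, compatible with the incidence maps), and then check that $(\mathbf{uGraph},\cM)$ is $\cM$-adhesive. The cleanest route is to exhibit $\mathbf{uGraph}$ as (equivalent to) a suitable slice/comma category over $\mathbf{Set}$ and invoke the general theorem that comma categories built from functors preserving the relevant pullbacks inherit $\cM$-adhesivity from their base; since $\mathbf{Set}$ is adhesive and $\cP^{(1,2)}$ is a reasonably well-behaved (pullback-preserving on monos) endofunctor, this should transfer. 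I would first establish existence and stability of pushouts along $\cM$-morphisms and the van Kampen condition at the level of the underlying edge- and vertex-sets, then lift these to the incidence-map component.

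Next I would dispatch the finitary and initial-object conditions, which are the most routine. Finitariness holds by construction, since $\mathbf{uGraph}$ is explicitly the finitary restriction, so every object has only finitely many $\cM$-subobjects up to isomorphism; I would note that the finite restriction is closed under the pushout and pullback constructions needed. The $\cM$-initial object is the empty graph $\mIO$ (empty edge and vertex sets, empty incidence map), and I would check that the unique morphism from $\mIO$ into any $G$ lies in $\cM$, which is immediate since the empty function is injective. For $\cM$-effective unions and epi-$\cM$-factorization, I would argue componentwise: in $\mathbf{Set}$ every morphism factors as a surjection followed by an injection, and unions of subobjects are effective; since limits and colimits in $\mathbf{uGraph}$ are computed componentwise on the edge and vertex sets (with the incidence map determined by functoriality of $\cP^{(1,2)}$), these factorization and effective-union properties lift directly.

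The more delicate part is the Sesqui-Pushout clause, namely the existence of final pullback complements (FPCs) along composable pairs of $\cM$-morphisms and the stability of $\cM$ under FPCs. Here I would construct the FPC explicitly at the level of underlying sets, using the known fact that $\mathbf{Set}$ (and its slices) admit FPCs along monomorphisms, and then verify that the incidence-map component can be consistently defined so that the resulting square is genuinely final among pullback complements. The key compatibility to check is that the componentwise FPC of vertex and edge sets carries a well-defined edge-incidence map into $\cP^{(1,2)}$ of the constructed vertex set; this is where the restriction to subsets of size $1$ or $2$ could in principle cause an edge to lose an endpoint, so I must confirm that the $\cM$-morphisms involved do not delete vertices incident to retained edges. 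I expect this endpoint-preservation argument to be the main obstacle, since it is the one step that genuinely uses the specific shape of $\cP^{(1,2)}$ rather than generic comma-category machinery.

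Finally I would assemble these pieces: $\cM$-adhesivity plus the auxiliary structural properties give the DPO variant, and adjoining the FPC existence and stability results gives the extended SqPO variant, so that $\mathbf{uGraph}$ satisfies Assumption~\ref{as:main} in both forms. Throughout, I would lean on the observation that all constructions reduce to componentwise constructions in $\mathbf{Set}$ plus a check that the incidence map behaves well, so the proof is essentially a sequence of componentwise verifications with one genuinely content-bearing step at the FPC stage.
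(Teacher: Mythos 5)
Your overall skeleton matches the paper's proof quite closely: the paper likewise inherits $\cM$-adhesivity, finitariness, the $\cM$-initial object and $\cM$-effective unions for componentwise-monic $\cM$ from prior work on the comma-category structure of $\mathbf{uGraph}$ over $\mathbf{Set}$, and then treats epi-$\cM$-factorization and final pullback complements as the two items requiring explicit construction. You also correctly single out the FPC step as the one place where the specific shape of $\cP^{(1,2)}$ matters. However, your resolution of that step contains a genuine gap. You propose to form the \emph{componentwise} FPC of the edge and vertex sets and then to ``confirm that the $\cM$-morphisms involved do not delete vertices incident to retained edges.'' That confirmation fails in general, and it must be allowed to fail: Assumption~\ref{as:main} demands FPCs along \emph{all} composable pairs of $\cM$-morphisms, including those in which a deleted vertex still has incident edges outside the pattern --- which is precisely the situation SqPO rewriting is designed to handle (and the situation in which DPO pushout complements do \emph{not} exist). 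Concretely, take $A=\mIO$, let $B$ consist of a single vertex $v$, and let $D$ have vertices $\{v,w\}$ and one edge $e$ joining them; the inclusions $A\hookrightarrow B\hookrightarrow D$ are composable $\cM$-morphisms. The componentwise construction gives vertex set $V_D\setminus(V_B\setminus V_A)=\{w\}$ and edge set $E_D\setminus(E_B\setminus E_A)=\{e\}$, and $e$ has lost an endpoint: the candidate is not an object of $\mathbf{uGraph}$ at all, and no verification about the given morphisms can repair this.

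The correct move, which the paper makes, is to \emph{amend the construction} rather than verify a condition: the edge component must be cut down to $E_C=\{e\in E_D\setminus(E_B\setminus E_A)\mid u_D(e)\in\cP^{(1,2)}(V_C)\}$, i.e.\ dangling edges are discarded, so the FPC in $\mathbf{uGraph}$ is genuinely \emph{not} computed componentwise --- this contradicts your guiding principle that everything reduces to componentwise constructions in $\mathbf{Set}$ plus a compatibility check. One must then verify that this smaller candidate is a pullback complement and is \emph{final} among them (finality is exactly what legitimizes dropping the dangling edges), and $\cM$-stability is immediate since both legs of the construction are inclusions. In the example above this yields $C=(\emptyset,\{w\})$, which is indeed the FPC. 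A smaller remark on your epi-$\cM$-factorization step: the claim that it ``lifts directly'' is correct in outcome but silently uses the one fact with content, namely that the image edge set carries a well-defined incidence map into $\cP^{(1,2)}$ of the image vertex set; this needs that $\cP^{(1,2)}$ preserves monomorphisms, and the paper packages the argument as a pullback together with uniqueness of epi-mono factorizations in $\mathbf{Set}$. Unlike the FPC issue, that omission is easily patched, but it should be made explicit.
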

\begin{proof}
	As demonstrated in~\cite{bp2019-ext}, $\mathbf{uGraph}$ is indeed a finitary $\cM$-adhesive category with $\cM$-initial object and $\cM$-effective unions, for $\cM$ the class of component-wise monic $\mathbf{uGraph}$-morphisms. It thus remains to prove the existence of an epi-$\cM$-factorization as well as the properties related to FPCs. To this end, utilizing the fact that the category $\mathbf{Set}$ upon which the comma category $\mathbf{uGraph}$ is based possesses an epi-mono-factorization, we may construct the following diagram from a $\mathbf{uGraph}$-morphism $\varphi=(\varphi_E,\cP^{(1,2)}(\varphi_V))$ (for component morphisms $\varphi_E:E\rightarrow E'$ and $\varphi_V:V\rightarrow V'$):
\begin{equation}
    \inputtikz{uGraph1}
\end{equation}
The diagram is constructed as follows:
\begin{enumerate}
\item Perform the epi-mono-factorizations $\varphi_E=m_E\circ e_E$ and $\varphi_V=m_V\circ e_V$, and apply the functor $\cP^{(1,2)}$ in order to obtain the morphisms $\cP^{(1,2)}(e_V)$ and $\cP^{(1,2)}(m_V)$; since  the functor $\cP^{(1,2)}$ preserves monomorphisms~\cite{padberg2017towards}, $\cP^{(1,2)}(m_V)\in \mono{\mathbf{Set}}$.
\item Construct the pullback 
\[(E'{\color{h1color}\leftarrow P\rightarrow}\cP^{(1,2)}(\overline{V})):=\pB{E'\rightarrow \cP^{(1,2)}(V')\leftarrow \cP^{(1,2)}(\overline{V})}\,,
\]
Since monomorphisms are stable under pullback in $\mathbf{Set}$, having proved that $\cP^{(1,2)}(m_V)\in \mono{\mathbf{Set}}$ implies ${\color{h1color}(p_E:P\rightarrow E')}\in \mono{\mathbf{Set}}$.
\item By the universal property of pullbacks, there exists a morphism ${\color{h1color}(p:E\rightarrow P)}$. Let $p={\color{h2color}m_P\circ e_P}$ be the epi-mono-factorization of this morphism. 
\item By stability of monomorphisms under composition in $\mathbf{Set}$, we find that ${\color{h1color}p_E}\circ {\color{h2color}m_P}\in \mono{\mathbf{Set}}$, and consequently $\varphi_E=({\color{h1color}p_E}\circ {\color{h2color}m_P})\circ {\color{h2color}e_P}$ yields an alternative epi-mono-factorization of $\varphi_E$. Then by uniqueness of epi-mono-factorizations up to isomorphism, there must exist an isomorphism $(\overline{E}{\color{h2color}\rightarrow \overline{\overline{E}}})\in \iso{\mathbf{Set}}$.
\end{enumerate}
We have thus demonstrated that both $(e_E,\cP^{(1,2)}(e_V))$ and $(m_E,\cP^{(1,2)}(m_V))$ are morphisms in $\mathbf{uGraph}$. Since morphisms in comma categories are mono-, epi- or iso-morphisms if they are so componentwise~\cite{ehrig:2006fund}, we conclude that 
\[
(e_E,\cP^{(1,2)}(e_V))\in\epi{\mathbf{uGraph}}\,,\quad (m_E,\cP^{(1,2)}(m_V))\in\mono{\mathbf{uGraph}}\,,
\]
which finally entails that we have explicitly constructed an epi-mono-factorization of the $\mathbf{uGraph}$-morphism $(\varphi_E,\cP^{(1,2)}(\varphi_V))$. 

In order to demonstrate that FPCs along pairs of composable $\cM$-morphisms $\varphi_A,\varphi_B\in\cM$ in $\mathbf{uGraph}$ exist (for $\cM$ the class of component-wise monomomophic $\mathbf{uGraph}$ morphisms), we provide the following explicit construction:
\begin{equation}
\begin{array}{c|c}
    \inputtikz{uGraph2}\hphantom{X} & \hphantom{X}
    \begin{aligned}
        V_C&=V_D\setminus(V_B\setminus V_A)\\
        E_C&=\{e\in E_D\setminus (E_B\setminus E_A)\mid
        u_D(e)\in\cP^{(1,2)}(V_C)\}\\
        u_C&=u_D\vert_{E_C}\\
        \varphi_C&=(E_A\hookrightarrow E_C,\cP^{(1,2)}(V_A\hookrightarrow V_C))\\
        \varphi_D&=(E_C\hookrightarrow E_D,\cP^{(1,2)}(V_C\hookrightarrow V_D))
    \end{aligned}
    \end{array}
\end{equation}
\end{proof}

\subsection{Conditions}\label{sec:cond}

Referring to Appendix~\ref{sec:condApp} for further details and technical definitions, we will utilize as a \textbf{notational convention} the standard shorthand notations
\begin{equation}
    \exists(X\hookrightarrow Y):=\exists(X\hookrightarrow Y,\ac{true}_Y)\,,\quad
    \forall(X\hookrightarrow Y,\ac{c}_Y):=\neg\exists(X\hookrightarrow Y,\neg\ac{c}_Y)\,.
\end{equation}
For example, the constraints
\begin{equation*}
    \ac{c}_{\mIO}^{(1)}=\exists(\mIO\hookrightarrow \inputtikz{c1})\,,\; \ac{c}_{\mIO}^{(2)}=\not \exists(\mIO\hookrightarrow \inputtikz{c2})\,,\;
\ac{c}_{\mIO}^{(3)}=\forall(\mIO\hookrightarrow 
\inputtikz{c3},\exists(
\inputtikz{c4}\hookrightarrow 
\inputtikz{c5}))
\end{equation*}
express for a given object $Z\in\obj{\bfC}$ that $Z$ contains at least two vertices (if $Z\vDash\ac{c}_{\mIO}^{(1)}$), that $Z$ does not contain parallel pairs of directed edges (if $Z\vDash\ac{c}_{\mIO}^{(2)}$), and that for every directed edge in $Z$ there also exists a directed edge between the same endpoints with opposite direction (if $Z\vDash\ac{c}_{\mIO}^{(3)}$), respectively.

\subsection{Compositional rewriting with conditions}\label{sec:crc}

Throughout this section, we assume that we are given a type $\bT\in\{DPO,SqPO\}$ of rewriting semantics and an $\cM$-adhesive category $\bfC$ satisfying the respective variant of Assumption~\ref{as:main}. In categorical rewriting theories, the universal constructions utilized such as pushouts, pullbacks, pushout complements and final pullback complements are unique only up to universal isomorphisms. This motivates specifying a suitable notion of equivalence classes of rules with conditions:

\begin{definition}[Rules with conditions]\label{def:rwc}
    Let $\LinAc{\bfC}$ denote the class of \emph{(linear) rules with conditions}, defined as
    \begin{equation}
        \LinAc{\bfC}:=\{(O\xleftarrow{o}K\xrightarrow{i}I;\ac{c}_I)\mid o,i\in\cM,\; \ac{c}_I\in\cond{\bfC}\}\,.
    \end{equation}
{\makeatletter
\let\par\@@par
\par\parshape0
\everypar{}
\begin{wrapfigure}[5]{r}{0.44\linewidth}
\vspace{-0.9em}
  \begin{equation}
       \inputtikz{isoR}
  \end{equation}
\end{wrapfigure}
\noindent We define two rules with conditions $R_j=(r_j,\ac{c}_{I_j})$ ($j=1,2$) \emph{equivalent}, denoted $R_2\sim R_1$, iff $\ac{c}_{I_1}\equiv\ac{c}_{I_2}$ and if there exist isomorphisms $\omega,\kappa,\iota\in\iso{\bfC}$ such that the diagram on the right commutes. We denote by $\LinEq{\bfC}$ the set of equivalence classes under $\sim$ of rules with conditions.\par}
\end{definition}

\begin{definition}[Direct derivations]
    Let $r=(O\hookleftarrow K\hookrightarrow I)\in\Lin{\bfC}$ and $\ac{c}_I\in\cond{\bfC}$ be concrete representatives of some equivalence class $R\in\LinEq{\bfC}$, and let $X,Y\in\obj{\bfC}$ be objects. Then a \emph{type $\bT$ direct derivation} is defined as a commutative diagram such as below right, where all morphism are in $\cM$ (and with the left representation a shorthand notation)
\begin{equation}\label{eq:DD}
\inputtikz{DD1}\quad :=\quad 
\inputtikz{DD2}\,.
\end{equation}
with the following pieces of information required relative to the type:
\begin{enumerate}
    \item $\mathbf{\bT=DPO}$: given $(m:I\hookrightarrow X)\in\cM$, $m$ is a \emph{DPO-admissible match of $R$ into $X$}, denoted $m\in\MatchGT{DPO}{R}{X}$, if $m\vDash \ac{c}_I$ and $(A)$ is constructable as a \emph{pushout complement}, in which case $(B)$ is constructed as a \emph{pushout}.
    \item $\mathbf{\bT=SqPO}$: given $(m:I\hookrightarrow X)\in\cM$, $m$ is a \emph{SqPO-admissible match of $R$ into $X$}, denoted $m\in\MatchGT{SqPO}{R}{X}$, if $m\vDash\ac{c}_I$, in which case $(A)$ is constructed as a \emph{final pullback complement} and $(B)$ as a \emph{pushout}.
    \item $\mathbf{\bT=DPO^{\dag}}$: given just the ``plain rule'' $r$ and $(m^{*}:O\hookrightarrow Y)\in\cM$, $m^{*}$ is a \emph{DPO${}^{\dag}$-admissible match of $r$ into $X$}, denoted $m\in\MatchGT{DPO^{\dag}}{r}{Y}$, if $(B)$ is constructable as a \emph{pushout complement}, in which case $(B)$ is constructed as a \emph{pushout}.
\end{enumerate}
For types $\bT\in\{DPO,SqPO\}$, we will sometimes employ the notation $R_m(X)$ for the object $Y$.
\end{definition}
Note that at this point, we have not yet resolved a conceptual issue that arises from the non-uniqueness of a direct derivation given a rule and an admissible match. Concretely, the pushout complement, pushout and FPC constructions are only unique up to isomorphisms. This issue will ultimately be resolved as part of the rule algebraic theory. We next consider a certain \emph{composition operation} on rules with conditions that is quintessential to our main constructions:

\begin{definition}[Rule compositions]\label{def:Rcomp}
 Let $R_1,R_2\in \LinEq{\bfC}$ be two equivalence classes of rules with conditions, and let $r_j\in\Lin{\bfC}$ and $\ac{c}_{I_j}$ be concrete representatives of $R_j$ (for $j=1,2$). For $\bT\in\{DPO,SqPO\}$, an $\cM$-span $\mu=(I_2\hookleftarrow M_{21}\hookrightarrow O_1)$ (i.e.\ with $(M_{21}\hookrightarrow O_1),(M_{21}\hookrightarrow I_2)\in\cM$) is a \emph{$\bT$-admissible match of $R_2$ into $R_1$} if the diagram below is constructable (with $N_{21}$ constructed by taking pushout)
 \begin{equation}\label{eq:defRcomp}
 \inputtikz{cd-A}
\end{equation}
and if $\ac{c}_{I_{21}}\not{\!\!\dot{\equiv}}\,\,\ac{false}$. Here, the condition $\ac{c}_{I_{21}}$ is computed as
\begin{equation}
    \ac{c}_{I_{21}}:=\Shift(I_1\hookrightarrow I_{21},\ac{c}_{I_1})\;\land\; \Trans(N_{21}\leftharpoonup I_{21},\Shift(I_2\hookrightarrow N_{21},\ac{c}_{I_2}))\,.
\end{equation}
In this case, we define the \emph{type $\bT$ composition of $R_2$ with $R_1$ along $\mu$}, denoted $\compGT{\bT}{R_2}{\mu}{R_1}$, as
\begin{equation}
\compGT{\bT}{R_2}{\mu}{R_1}:=[(O_{21}\leftharpoonup I_{21};\ac{c}_{I_{21}})]_{\sim}\,,
\end{equation}
where $(O_{21}\leftharpoonup I_{21}):=(O_{21}\leftharpoonup N_{21})\circ(N_{21}\leftharpoonup I_{21})$ (with $\circ$ the \emph{span composition} operation).
\end{definition}

We recall in Appendix~\ref{app:ACthms} two important technical results on the notions of direct derivations and rule compositions that have been derived in~\cite{behrRaSiR} (where however the DPO-type concurrency theorem is of course classical, cf.\ e.g.\ \cite{ehrig:2006fund}).

\section{Rule algebras for compositional rewriting with conditions}\label{sec:ra}

The associativity property of rule compositions in both DPO- and SqPO-type semantics for rewriting with conditions as proved in~\cite{behrRaSiR} may be fruitfully exploited within rule algebra theory. One possibility to encode the non-determinism in sequential applications of rules to objects is given by lifting each possible configuration $X\in\obj{\bfC}_{\cong}$ (i.e.\ isomorphism class of \emph{objects}) to a basis vector $\ket{X}$ of a vector space $\hat{\bfC}$; then a rule $r$ is lifted to a linear operator acting on $\hat{\bfC}$, with the idea that this operator acting upon a basis vector $\ket{X}$ should evaluate to the ``sum over all possibilities to act with $r$ on $X$''. We will extend here the general rule algebra framework~\cite{bdg2016,bp2018,nbSqPO2019} to the present setting of rewriting rules with conditions.\\

We will first lift the notion of rule composition into the setting of a composition operation on a certain abstract vector space over rules, thus realizing the heuristic concept of ``summing over all possibilities to compose rules''.
\begin{definition}
    Let $\bT\in\{DPO,SqPO\}$ be the rewriting type, and let $\bfC$ be a category satisfying the relevant variant of Assumption~\ref{as:main}. Let $\overline{\cR}_{\bfC}$ be an $\bR$-vector space, defined via a bijection $\delta:\LinEq{\bfC}\xrightarrow{\cong}\mathsf{basis}(\overline{\cR}_{\bfC})$ from the set of equivalence classes of linear rules with conditions to the set of basis vectors of $\overline{\cR}_{\bfC}$. Let $\rap{\bT}{}{}$ denote the \emph{type $\bT$ rule algebra product}, a binary operation defined  via its action on basis elements $\delta(R_1),\delta(R_1)\in \overline{\cR}_{\bfC}$ (for $R_1,R_2\in \LinEq{\bfC}$) as
    \begin{equation}
        \rap{\bT}{\delta(R_2)}{\delta(R_1)}:=\sum_{\mu\in \RMatchGT{\bT}{R_2}{R_1}}\delta\left(\compGT{\bT}{R_2}{\mu}{R_1}\right)\,.
    \end{equation}
    We refer to $\overline{\cR}_{\bfC}^{\bT}:=(\overline{\cR}_{\bfC},\rap{\bT}{}{})$ as the \emph{$\bT$-type rule algebra over $\bfC$}.
\end{definition}
\begin{theorem}
    For type $\bT\in\{DPO,SqPO\}$ over a category $\bfC$ satisfying Assumption~\ref{as:main}, the rule algebra $\overline{\cR}_{\bfC}^{\bT}$ is an \emph{associative unital algebra}, with unit element $\delta(R_{\mIO})$, where $R_{\mIO}:=(\mIO\hookleftarrow\mIO\hookrightarrow \mIO;\ac{true})$.
\end{theorem}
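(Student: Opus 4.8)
The plan is to reduce the statement to the associativity theorem for type-$\bT$ rule compositions recalled in Appendix~\ref{app:ACthms} (proved in~\cite{behrRaSiR}). Since $\rap{\bT}{}{}$ is prescribed on basis vectors and extended bilinearly, both associativity and the two unit laws need only be verified on basis elements $\delta(R_1),\delta(R_2),\delta(R_3)$ with $R_j\in\LinEq{\bfC}$. I would first record that each product is a well-defined vector of $\overline{\cR}_{\bfC}$: finitarity of $\bfC$ (Assumption~\ref{as:main}) guarantees that the match set $\RMatchGT{\bT}{R_2}{R_1}$ is finite, so the defining sum is a finite linear combination, and independence of the chosen representatives $r_j,\ac{c}_{I_j}$ follows from the fact that $\compGT{\bT}{R_2}{\mu}{R_1}$ is a well-defined element of $\LinEq{\bfC}$ (Definition~\ref{def:Rcomp}).

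For associativity I would expand both iterated products via the definition of $\rap{\bT}{}{}$. The left-associated product unfolds to the double sum
\[
\rap{\bT}{\left(\rap{\bT}{\delta(R_3)}{\delta(R_2)}\right)}{\delta(R_1)}=\sum_{\mu_{32}}\sum_{\mu_{(32)1}}\delta\!\left(\compGT{\bT}{\left(\compGT{\bT}{R_3}{\mu_{32}}{R_2}\right)}{\mu_{(32)1}}{R_1}\right)\,,
\]
ranging over $\mu_{32}\in\RMatchGT{\bT}{R_3}{R_2}$ and $\mu_{(32)1}\in\RMatchGT{\bT}{\compGT{\bT}{R_3}{\mu_{32}}{R_2}}{R_1}$, while the right-associated product is the analogous double sum over pairs $(\mu_{21},\mu_{3(21)})$ with $\mu_{21}\in\RMatchGT{\bT}{R_2}{R_1}$ and $\mu_{3(21)}\in\RMatchGT{\bT}{R_3}{\compGT{\bT}{R_2}{\mu_{21}}{R_1}}$, whose summand is $\delta(\compGT{\bT}{R_3}{\mu_{3(21)}}{\left(\compGT{\bT}{R_2}{\mu_{21}}{R_1}\right)})$. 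The associativity theorem then furnishes a bijection between these two index sets of composable match pairs under which the corresponding triple composites — objects, spans, and the conditions accumulated through the $\Shift$ and $\Trans$ calculus — agree as equivalence classes in $\LinEq{\bfC}$. Since $\delta$ is a bijection onto the basis, the two double sums coincide term by term after reindexing, which is precisely the desired identity.

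The unit laws I would obtain by analyzing the matches involving $R_{\mIO}=(\mIO\hookleftarrow\mIO\hookrightarrow\mIO;\ac{true})$. For left-unitality, an admissible match of $R_{\mIO}$ into $R$ is an $\cM$-span $(\mIO\hookleftarrow M\hookrightarrow O_R)$ whose left leg is an $\cM$-morphism $M\to\mIO$ into the $\cM$-initial object, forcing $M\cong\mIO$; hence $\RMatchGT{\bT}{R_{\mIO}}{R}$ consists of the single trivial match. Along it the pushout, FPC and span-composition constructions degenerate so that $O_{21}\cong O_R$ and $I_{21}\cong I_R$, and because $\ac{c}_{I_2}=\ac{true}$ the accumulated condition reduces to $\ac{c}_{I_R}$; thus $\compGT{\bT}{R_{\mIO}}{\mu}{R}\sim R$ and $\rap{\bT}{\delta(R_{\mIO})}{\delta(R)}=\delta(R)$. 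The right-unit law $\rap{\bT}{\delta(R)}{\delta(R_{\mIO})}=\delta(R)$ follows by the mirror-image argument, the match $(I_R\hookleftarrow M\hookrightarrow\mIO)$ of $R$ into $R_{\mIO}$ being forced trivial for the same reason.

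I expect the genuine difficulty to lie not in this bookkeeping but in the imported associativity theorem: constructing the bijection of composable match pairs requires careful control of the final pullback complements in the SqPO case — which exist and behave well only by virtue of the FPC-existence and FPC-stability clauses of Assumption~\ref{as:main} — together with the compatibility of the $\Shift$/$\Trans$ condition transport with the nested diagram constructions. Granting that theorem, the only residual care needed is to confirm the finiteness and representative-independence noted above and to check the degenerate composition underlying the unit laws; everything else is the purely formal reduction sketched here.
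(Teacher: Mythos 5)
Your proof is correct and follows essentially the same route as the paper's: associativity is delegated to the rule-composition associativity theorem (Theorem~\ref{thm:assocR}), and unitality is established by an explicit computation of the compositions with $R_{\mIO}$, whose admissible matches are forced to be trivial by $\cM$-initiality. Your write-up simply fills in details the paper leaves implicit (finiteness of the match sets via finitarity, representative-independence, and the degeneration of the composition diagram and of the $\Shift$/$\Trans$ condition transport in the unit case).
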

\begin{proof}
    \emph{Associativity} follows from Theorem~\ref{thm:assocR}, while \emph{unitality}, i.e.\ that 
    \[
      \forall R\in \LinEq{\bfC}:\quad  \rap{\bT}{\delta(R_{\mIO})}{\delta(R)}
      =\rap{\bT}{\delta(R)}{\delta(R_{\mIO})}=\delta(R)
    \]
    follows directly from an explicit computation of the relevant rule compositions.
\end{proof}

As alluded to in the introduction, the prototypical example of rule algebras are those of DPO- or (in this case equivalently) SqPO-type over discrete graphs, giving rise as a special case to the famous Heisenberg-Weyl algebra of key importance in mathematical chemistry, combinatorics and quantum physics (see~\cite{bp2019-ext} for further details). We will now illustrate the rule algebra concept in an example involving a more general base category.

\begin{example}
    For the category $\mathbf{uGraph}$ and DPO-type rewriting semantics, consider as an example the following two rules with conditions:
    \begin{equation}
        R_C:=\left(\inputtikz{RC1}
	\hookleftarrow \inputtikz{RC2}
	\hookrightarrow \inputtikz{RC3};\neg \exists \left(\inputtikz{RC4}\hookrightarrow \inputtikz{RC5}\right)\right)\,,\quad R_V:=(\inputtikz{RV1}\hookleftarrow\mIO\hookrightarrow\inputtikz{RV2};\ac{true})\,.
    \end{equation}
The first rule is a typical example of a rule with application conditions, i.e.\ here stating that the rule may only link two vertices if they were previously not already linked to each other. The second rule, owing to DPO semantics, can in effect only be applied to vertices without any incident edges. The utility of the rule-algebraic composition operation then consists in reasoning about sequential compositions of these rules, for example (letting $*:=\rap{DPO}{}{}$):
\begin{equation}
\begin{aligned}
    \delta(R_C)*\delta(R_V)&=\delta(R_C\uplus R_V)+2\delta(R_C')\,,\; R_C':=\left(\inputtikz{RCRV1}\hookleftarrow \inputtikz{RCRV2}\hookrightarrow \inputtikz{RCRV3};\ac{true}\right)\\
    \delta(R_V)*\delta(R_C)&=\delta(R_C\uplus R_V)\,.
\end{aligned}
\end{equation}
To provide some intuition: the first computation encodes the causal information that the two rules may either be composed along a trivial overlap, or rule $R_C$ may overlap on one of the vertices in the output of $R_V$; in the latter case, any vertex to which first $R_V$ and then $R_C$ applies must not have had any incident edges, i.e.\ in particular no edge violating the constraint of $R_C$, which is why the composite rule $R_C'$ does not feature any non-trivial constraint. In the other order of composition, the two vertices in the output of $R_C$ are linked by an edge, so $R_V$ cannot be applied to any of these two vertices (leaving just the trivial overlap contribution).
\end{example}

Just as the rule algebra construction encodes the compositional associativity property of rule compositions, the following \emph{representation} construction encodes in a certain sense the properties described by the concurrency theorem:
\begin{definition}
    Let $\bfC$ be an $\cM$-adhesive category satisfying Assumption~\ref{as:main}. Let $\hat{\bfC}$ be defined as the \emph{$\bR$-vector space} whose set of basis vectors is isomorphic to the set\footnote{We assume here that the isomorphism classes of objects of $\bfC$ form a \emph{set} (i.e.\ not a proper class).} of iso-classes of objects of $\bfC$ via a bijection $\ket{.}:\obj{\bfC}_{\cong}\rightarrow \mathsf{basis}(\hat{\bfC})$. Then the \emph{$\bT$-type canonical representation of the $\bT$-type rule algebra over $\bfC$}, denoted $\overline{\rho}^{\bT}_{\bfC}$, is defined as the morphism $\overline{\rho}^{\bT}_{\bfC}:\overline{\cR}_{\bfC}^{\bT}\rightarrow End_{\bR}(\hat{\bfC})$ specified via 
    \begin{equation}
    \forall R\in\LinEq{\bfC},X\in\obj{\bfC}_{\cong}:
    \quad \canRep{\bT}{\delta(R)}\ket{X}:=\sum_{m\in\MatchGT{\bT}{R}{X}}\ket{R_m(X)}\,.
    \end{equation}
\end{definition}

\begin{theorem}\label{thm:canrep}
    $\overline{\rho}^{\bT}_{\bfC}$ as defined above is an \emph{algebra homomorphism} (and thus in particular a well defined representation).
\end{theorem}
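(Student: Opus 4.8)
The plan is to verify the three defining properties of an algebra homomorphism: $\bR$-linearity, preservation of the unit, and multiplicativity. Linearity is immediate once one observes that the defining sum is invariant under the equivalence $\sim$ --- isomorphic rules matched into $X$ yield isomorphic output objects, hence the same basis vectors $\ket{R_m(X)}$ --- so that $\canRep{\bT}{\cdot}$ is a well-defined $\bR$-linear map on $\overline{\cR}_{\bfC}^{\bT}$; it then suffices to check the remaining two properties on basis elements. For the unit, I would compute $\canRep{\bT}{\delta(R_{\mIO})}\ket{X}$ directly: since $R_{\mIO}=(\mIO\hookleftarrow\mIO\hookrightarrow\mIO;\ac{true})$, there is (up to isomorphism) a single admissible match $\mIO\hookrightarrow X$, and applying the trivial rule returns $X$ unchanged, whence $\canRep{\bT}{\delta(R_{\mIO})}=\mathrm{id}_{\hat{\bfC}}$.

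The core of the proof is multiplicativity, namely that
\[
  \canRep{\bT}{\rap{\bT}{\delta(R_2)}{\delta(R_1)}}
  =
  \canRep{\bT}{\delta(R_2)}\circ\canRep{\bT}{\delta(R_1)}
\]
holds for all $R_1,R_2\in\LinEq{\bfC}$. I would evaluate both sides on an arbitrary basis vector $\ket{X}$ and compare them as formal $\bR$-linear combinations of basis vectors indexed by iso-classes of objects. Unfolding the definitions, the right-hand side is a double sum indexed by pairs $(m_1,m_2)$ with $m_1\in\MatchGT{\bT}{R_1}{X}$ and $m_2\in\MatchGT{\bT}{R_2}{(R_1)_{m_1}(X)}$, i.e.\ over sequential applications first of $R_1$ and then of $R_2$. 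Expanding the rule algebra product and then the representation, the left-hand side is a double sum indexed by pairs $(\mu,n)$ with $\mu\in\RMatchGT{\bT}{R_2}{R_1}$ an admissible match of $R_2$ into $R_1$ and $n$ an admissible match of the composite $\compGT{\bT}{R_2}{\mu}{R_1}$ into $X$.

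The claim thus reduces to exhibiting a bijection between these two index sets under which the resulting object is preserved up to isomorphism. This is precisely the content of the type-$\bT$ concurrency theorem (with conditions) recalled in Appendix~\ref{app:ACthms}: each pair of composable direct derivations realizing first $R_1$ and then $R_2$ on $X$ corresponds uniquely to a choice of admissible match $\mu$ together with a match $n$ of the composite into $X$ producing the same output, and vice versa. I expect the step most in need of care to be the bookkeeping of the application conditions. One must verify that, under this bijection, the pair of admissibility constraints $m_1\vDash\ac{c}_{I_1}$ and $m_2\vDash\ac{c}_{I_2}$ on the two factors is equivalent to the single constraint $n\vDash\ac{c}_{I_{21}}$ on the composite, where $\ac{c}_{I_{21}}$ is assembled from $\ac{c}_{I_1}$ and $\ac{c}_{I_2}$ via the $\Shift$ and $\Trans$ operations. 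Granting the concurrency theorem, this equivalence of conditions is exactly what ensures that no spurious terms survive on either side, so that the two sums agree term by term and the homomorphism property follows.
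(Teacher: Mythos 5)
Your proposal is correct and follows essentially the same route as the paper's proof: unitality is checked by direct evaluation on basis vectors $\ket{X}$, and multiplicativity is reduced, after expanding both sides as double sums over matches, to the bijection furnished by the concurrency theorem for rules with conditions (Theorem~\ref{thm:concur}). Your additional remarks on well-definedness under $\sim$ and on the $\Shift$/$\Trans$ bookkeeping of application conditions are sound, and are indeed absorbed into the statement of that theorem exactly as you anticipate.
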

\begin{proof}
The proof is entirely analogous to the one for the case without application conditions~\cite{bp2018,nbSqPO2019} (cf.\ Appendix~\ref{app:proofCanrep}).
\end{proof}

\section{Stochastic mechanics formalism}\label{sec:stochMech}

Referring to~\cite{bdg2016,bp2019-ext,bdg2019} for further details and derivations, suffice it here to highlight the key role played by the algebraic concept of \emph{commutators} in stochastic mechanics. Let us first provide the constructions of continuous-time Markov chains (CTMCs) and observables in stochastic rewriting systems.
\begin{definition}
    Let $\bra{}:\hat{\bfC}\rightarrow \bR$ (referred to as \emph{dual projection vector}) be defined via its action on basis vectors of $\hat{\bfC}$ as $\braket{}{X}:=1_{\bR}$.
\end{definition}

\begin{theorem}\label{thm:CTMCs}
    Let $\bfC$ be a category satisfying the relevant variant of Assumption~\ref{as:main}, and let $\overline{\cR}_{\bfC}^{\bT}$ be the $\bT$-type rule algebra of linear rules with conditions over $\bfC$. Let $\rho\equiv \rho^{\bT}_{\bfC}$ denote the $\bT$-type canonical representation of $\cR_{\bfC}^{\bT}$. Then the following results hold:
\begin{enumerate}
    \item The basis elements of the space $\obs{\bfC}_{\bT}$ of \textbf{$\bT$-type observables}, i.e.\ the diagonal linear operators that arise as (linear combinations of) $\bT$-type canonical representations of rewriting rules with conditions, have the following structure ($ \hat{\cO}_{P,q}^{\ac{c}_P}$ in the DPO case, $\hat{\cO}_P^{\ac{c}_P}$ in the SqPO case):
\begin{equation}\label{eq:defObs}
\begin{aligned}
    \hat{\cO}_{P,q}^{\ac{c}_P}&:=\rho(\delta(P\xleftarrow{q}Q\xrightarrow{q}P;\ac{c}_P))\quad (P\in \obj{\bfC}_{\cong},q\in\cM,\ac{c}_P\in \cond{\bfC}_{\sim})\\
    \hat{\cO}_P^{\ac{c}_P}&:=\rho(\delta(P\xleftarrow{\cong}P\xrightarrow{\cong}P;\ac{c}_P))\quad (P\in \obj{\bfC}_{\cong},\ac{c}_P\in \cond{\bfC}_{\sim})\,.
\end{aligned}
\end{equation}
\item \textbf{DPO-type jump closure property:} for every linear rule with condition $R\equiv(O\hookleftarrow K\hookrightarrow I,\ac{c}_{I})\in \LinAc{\bfC}$, we find that
    \begin{equation}
        \bra{}\rho(\delta(R))=\bra{}\jcOp{\delta(R)}\,,
    \end{equation}
    where $\hat{\bO}:\overline{\cR}^{\text{DPO}}_{\bfC}\rightarrow End_{\bR}(\hat{\bfC})$ is the homomorphism defined via its action on basis elements $\delta(R)$ for $R=(O\hookleftarrow K\hookrightarrow I; \ac{c}_{I})\in\LinEq{\bfC}$ as
    \begin{equation}
        \jcOp{\delta(R)}:=\rho(\delta(I\hookleftarrow K\hookrightarrow I;\ac{c}_{I}))\in \obs{\bfC}\,.
    \end{equation}
    \item \textbf{SqPO-type jump closure property:} for every linear rule with condition $R\equiv(O\hookleftarrow K\hookrightarrow I,\ac{c}_{I})\in \LinAc{\bfC}$, we find that
    \begin{equation}
        \bra{}\rho(\delta(R))=\bra{}\jcOp{\delta(R)}\,,
    \end{equation}
    where\footnote{Since in applications we will always fix the type of rewriting to either DPO or SqPO, we will use the same symbol for the jump-closure operator in both cases.} $\hat{\bO}:\overline{\cR}^{\text{SqPO}}_{\bfC}\rightarrow End_{\bR}(\hat{\bfC})$ is the homomorphism defined via
    \begin{equation}
        \jcOp{\delta(R)}:=\rho(\delta(I\xleftarrow{\cong}I\xrightarrow{\cong}I;\ac{c}_{I}))\in \obs{\bfC}\,.
    \end{equation}
    \item \textbf{CTMCs via stochastic rewriting systems:} Let $\Prob{\bfC}$ be the space of \emph{(sub-)probability distributions over $\hat{\bfC}$} (i.e.\ $\ket{\Psi}=\sum_{X\in\obj{\bfC}_{\cong}}\psi_X\ket{X}$). Let $\cT$ be a collection of $N$ pairs of positive real-valued parameters $\kappa_j$ (referred to as \emph{base rates}) and linear rules $R_j$ with application conditions,
\begin{equation}
    \cT:=\{(\kappa_j,R_j)\}_{1\leq j\leq N}\qquad (\kappa_j\in \bR_{\geq 0}\,,\;R_j\equiv(r_j,\ac{c}_{I_j})\in \LinAc{\bfC})\,.
\end{equation} 
Then given an \emph{initial state} $\ket{\Psi_0}\in \Prob{\bfC}$, the $\bT$-type stochastic rewriting system based upon the transitions $\cT$ gives rise to the CTMC $(\cH,\ket{\Psi(0)})$ with time-dependent state $\ket{\Psi(t)}\in \Prob{\bfC}$ (for $t\geq0$) and evolution equation
\begin{equation}    
 \forall t\geq 0:\quad   \tfrac{d}{dt}\ket{\Psi(t)}=\cH\ket{\Psi(t)}\,,\quad \ket{\Psi(0)}=\ket{\Psi_0}\,.
\end{equation}
Here, the \emph{infinitesimal generator} $\cH$ of the CTMC is given by
\begin{equation}\label{eq:H}
    \cH=\hat{H}-\jcOp{\hat{H}}\,,\quad \hat{H}=\sum_{j=1}^N \kappa_j\,\rho(\delta(R_j))\,.
\end{equation}
\end{enumerate}
\end{theorem}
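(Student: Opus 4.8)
The plan is to prove the four claims in order, exploiting that each feeds into the next: the structural description of observables in~(1) establishes their \emph{diagonality}, which is exactly the property that makes the jump-closure operators of~(2) and~(3) tractable, and the jump-closure identity is in turn what certifies that the operator $\cH$ of~(4) generates a CTMC. For claim~(1), I would show that a rule whose output and input coincide acts as a pure counting operator. For the DPO observable rule $(P\xleftarrow{q}Q\xrightarrow{q}P;\ac{c}_P)$ both legs of the span are the same morphism, so for any admissible match $m:P\hookrightarrow X$ the pushout-complement-then-pushout construction reproduces $X$ up to isomorphism, i.e.\ $R_m(X)\cong X$; the SqPO case with the iso-span $(P\xleftarrow{\cong}P\xrightarrow{\cong}P;\ac{c}_P)$ is handled identically via the FPC-then-pushout construction. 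Consequently the corresponding operator $\rho(\delta(P\xleftarrow{q}Q\xrightarrow{q}P;\ac{c}_P))$ is diagonal, its eigenvalue on $\ket{X}$ being the number of matches $m:P\hookrightarrow X$ with $m\vDash\ac{c}_P$, which verifies the structure of $\obs{\bfC}_{\bT}$ claimed in~\eqref{eq:defObs}.

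For the jump-closure properties~(2) and~(3), the starting point is that $\braket{}{Y}=1_{\bR}$ for every $Y$, whence for a rule $R=(O\hookleftarrow K\hookrightarrow I;\ac{c}_I)$
\[
\bra{}\rho(\delta(R))\ket{X}
=\sum_{m\in\MatchGT{\bT}{R}{X}}\braket{}{R_m(X)}
=\big|\MatchGT{\bT}{R}{X}\big|\,.
\]
By claim~(1), $\jcOp{\delta(R)}$ is diagonal and $\bra{}\jcOp{\delta(R)}\ket{X}$ equals the number of admissible matches of the associated observable rule into $X$. The decisive step is to exhibit a bijection between the two match sets, which I would obtain by observing that admissibility of a match $m:I\hookrightarrow X$ depends only on the input datum $(K\hookrightarrow I)$ and on the condition $\ac{c}_I$: in the DPO case via $m\vDash\ac{c}_I$ together with existence of the pushout complement of $K\to I\to X$, which is insensitive to the output leg; and in the SqPO case via $m\vDash\ac{c}_I$ alone, since FPCs exist unconditionally under Assumption~\ref{as:main}. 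As the observable rules $(I\hookleftarrow K\hookrightarrow I;\ac{c}_I)$ and $(I\xleftarrow{\cong}I\xrightarrow{\cong}I;\ac{c}_I)$ retain precisely this input datum and condition, their admissible matches coincide with those of $R$, yielding $\bra{}\rho(\delta(R))=\bra{}\jcOp{\delta(R)}$; linearity of $\jcOp{-}$ then extends the identity from basis rules to all of $\overline{\cR}^{\bT}_{\bfC}$.

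Finally, for the CTMC construction~(4), I would check that $\cH=\hat{H}-\jcOp{\hat{H}}$ is infinitesimal stochastic. The off-diagonal entries are non-negative because $\hat{H}=\sum_j\kappa_j\,\rho(\delta(R_j))$ sends each $\ket{X}$ to a sum of basis vectors with unit coefficients and $\kappa_j\geq 0$, while $\jcOp{\hat{H}}$ is diagonal by claim~(1) and therefore alters only the diagonal. Conservativity $\bra{}\cH=0$ is immediate from the jump-closure identity applied term-by-term, $\bra{}\hat{H}=\sum_j\kappa_j\bra{}\rho(\delta(R_j))=\sum_j\kappa_j\bra{}\jcOp{\delta(R_j)}=\bra{}\jcOp{\hat{H}}$; together with non-negativity of the off-diagonal part this forces each diagonal entry to equal minus the corresponding off-diagonal column sum, so $\cH$ preserves $\Prob{\bfC}$ infinitesimally. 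The finitarity of $\bfC$ guarantees that $\hat{H}\ket{X}$ is a finite combination, so that $\cH$ is column-finite and well defined, and the existence and uniqueness of the CTMC $\ket{\Psi(t)}$ solving the master equation then follow from the standard theory of such generators. I expect the main obstacle to be the match-counting bijection underlying jump closure: one must argue carefully that replacing a rule by its observable neither creates nor destroys admissible matches, which requires treating the DPO pushout-complement condition and the always-satisfied SqPO FPC condition separately.
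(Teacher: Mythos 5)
Your proposal is correct and follows essentially the same route as the paper's own proof: part~(1) by verifying $R_m(X)\cong X$ directly from the definition of direct derivations, parts~(2) and~(3) by computing $\bra{}\rho(\delta(R))\ket{X}=\vert\MatchGT{\bT}{R}{X}\vert$ and observing that admissibility depends only on the input leg $(K\hookrightarrow I)$ and the condition $\ac{c}_I$ (plus pushout-complement existence in the DPO case), and part~(4) by checking the conservative, stable $Q$-matrix properties ($\bra{}\cH=0$, non-negative off-diagonal part, finiteness via finitarity). The only difference is that you spell out the match-set bijection and the column-sum argument in more detail than the paper's rather terse appendix does.
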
 
\begin{proof}
	See Appendix~\ref{sec:CTMCproofsApp}.
\end{proof}

\begin{remark}
    The operation $\hat{\bO}$ featuring in the DPO- and SqPO-type jump-closure properties has a very intuitive interpretation: given a linear rule with condition $R\equiv(r,\ac{c}_{I})\in \LinAc{\bfC}$, the linear operator $\jcOp{\delta(R)}$ is an observable that evaluates on a basis vector $\ket{X}\in \hat{\bfC}$ as $\jcOp{\delta(R)}\ket{X}=(\text{\# of ways to apply $R$ to $X$})\cdot\ket{X}$.
\end{remark}

As for the concrete computational techniques offered by the stochastic mechanics formalism, one of the key advantages of this rule-algebraic framework is the possibility to reason about \emph{expectation values} (and higher moments) of pattern-count observables in a principled and universal manner. The precise formulation is given by the following generalization of results from~\cite{bdg2019} to the setting of DPO- and SqPO-type rewriting for rules with conditions:
\begin{theorem}
    Given a CTMC $(\ket{\Psi_0},\cH)$ with time-dependent state $\ket{\Psi(t)}$ (for $t\geq 0$), a set of observables $O_1,\dotsc O_n\in\obs{\bfC}$ and $n$ \emph{formal variables} $\lambda_1,\dotsc,\lambda_n$, define the \emph{exponential moment-generating function (EMGF)} $M(t;\vec{\lambda})$ as
    \begin{equation}
        M(t;\vec{\lambda}):=\bra{}e^{\vec{\lambda}\cdot\vec{O}}\ket{\Psi(t)}\,,\quad  \vec{\lambda}\cdot\vec{O}:=\sum_{j=1}^n \lambda_jO_j\,.
    \end{equation}
    Then $M(t;\vec{\lambda})$ satisfies the following \emph{formal evolution equation} (for $t\geq0$):
    \begin{equation}\label{eq:MEGFevo}
    \begin{aligned}
	\tfrac{d}{dt}M(t;\vec{\lambda})&=\sum_{q\geq 1}\tfrac{1}{q!}\bra{}\left(ad_{\vec{\lambda}\cdot\vec{O}}^{\circ q}(\hat{H})\right)e^{\vec{\lambda}\cdot\vec{O}}\ket{\Psi(t)}\,,\; M(0;\vec{\lambda})=\bra{}e^{\vec{\lambda}\cdot\vec{O}}\ket{\Psi_0}\,.
    \end{aligned}
    \end{equation}
\end{theorem}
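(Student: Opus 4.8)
The plan is to differentiate the EMGF directly and then exploit two structural ingredients: the Hadamard (BCH-type) commutator expansion of a conjugated operator, and the jump-closure property established in Theorem~\ref{thm:CTMCs}. Since in $M(t;\vec{\lambda})=\bra{}e^{\vec{\lambda}\cdot\vec{O}}\ket{\Psi(t)}$ only the state $\ket{\Psi(t)}$ carries time-dependence (both $\bra{}$ and the observables $O_j$ are $t$-independent), I would first differentiate under the bracket and invoke the CTMC evolution equation $\tfrac{d}{dt}\ket{\Psi(t)}=\cH\ket{\Psi(t)}$ together with the explicit generator $\cH=\hat{H}-\jcOp{\hat{H}}$ from~\eqref{eq:H}, obtaining
\[
\tfrac{d}{dt}M(t;\vec{\lambda})=\bra{}e^{\vec{\lambda}\cdot\vec{O}}\hat{H}\ket{\Psi(t)}-\bra{}e^{\vec{\lambda}\cdot\vec{O}}\jcOp{\hat{H}}\ket{\Psi(t)}\,.
\]

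The central algebraic step is to commute the operator exponential $e^{\vec{\lambda}\cdot\vec{O}}$ past $\hat{H}$ by means of the identity $e^{A}Be^{-A}=\sum_{q\geq 0}\tfrac{1}{q!}\,ad_A^{\circ q}(B)$, applied with $A=\vec{\lambda}\cdot\vec{O}$ and $B=\hat{H}$. Rearranged, this reads
\[
e^{\vec{\lambda}\cdot\vec{O}}\hat{H}=\Bigl(\textstyle\sum_{q\geq 0}\tfrac{1}{q!}\,ad_{\vec{\lambda}\cdot\vec{O}}^{\circ q}(\hat{H})\Bigr)e^{\vec{\lambda}\cdot\vec{O}}\,,
\]
so that splitting off the $q=0$ term (which is simply $\hat{H}$) from the $q\geq 1$ terms isolates precisely the commutator series appearing on the right-hand side of~\eqref{eq:MEGFevo}. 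What then remains is to show that the residual $q=0$ contribution $\bra{}\hat{H}e^{\vec{\lambda}\cdot\vec{O}}\ket{\Psi(t)}$ cancels exactly against the jump-closure term.

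This cancellation rests on two observations. First, every $O_j\in\obs{\bfC}$ is a diagonal operator and, by the jump-closure property, so is $\jcOp{\hat{H}}$; diagonal operators mutually commute, hence $[\vec{\lambda}\cdot\vec{O},\jcOp{\hat{H}}]=0$ and $e^{\vec{\lambda}\cdot\vec{O}}$ passes freely through $\jcOp{\hat{H}}$, giving $\bra{}e^{\vec{\lambda}\cdot\vec{O}}\jcOp{\hat{H}}\ket{\Psi(t)}=\bra{}\jcOp{\hat{H}}e^{\vec{\lambda}\cdot\vec{O}}\ket{\Psi(t)}$. Second, the jump-closure property of Theorem~\ref{thm:CTMCs}, extended by linearity from basis rules to $\hat{H}=\sum_{j}\kappa_j\,\rho(\delta(R_j))$, yields the identity $\bra{}\hat{H}=\bra{}\jcOp{\hat{H}}$, whence $\bra{}\hat{H}e^{\vec{\lambda}\cdot\vec{O}}\ket{\Psi(t)}=\bra{}\jcOp{\hat{H}}e^{\vec{\lambda}\cdot\vec{O}}\ket{\Psi(t)}$. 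The two contributions therefore annihilate, leaving exactly the claimed $q\geq 1$ series; the initial condition $M(0;\vec{\lambda})=\bra{}e^{\vec{\lambda}\cdot\vec{O}}\ket{\Psi_0}$ follows immediately from $\ket{\Psi(0)}=\ket{\Psi_0}$.

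I expect the main point to be conceptual rather than a genuine technical obstacle: one must recognise that the subtraction in $\cH=\hat{H}-\jcOp{\hat{H}}$ is precisely engineered so that the $q=0$ term of the Hadamard expansion is cancelled by the jump-closure term, which is what makes the generating-function equation close on commutators alone. Since the statement concerns a \emph{formal} evolution equation, I would deliberately treat all manipulations at the level of formal power series in the $\lambda_j$, setting aside convergence of the operator exponential and of the commutator series.
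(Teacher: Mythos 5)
Your proof is correct and follows essentially the same route as the paper's: differentiate under the bracket, apply the BCH/Hadamard expansion $e^{A}Be^{-A}=e^{ad_{A}}(B)$, and kill the $q=0$ term via the relation $\bra{}\cH=0$, which you simply unpack explicitly into its two ingredients (jump-closure $\bra{}\hat{H}=\bra{}\jcOp{\hat{H}}$ and commutation of the diagonal operator $\jcOp{\hat{H}}$ with $e^{\vec{\lambda}\cdot\vec{O}}$). The only cosmetic difference is that you expand $\hat{H}$ rather than $\cH$ and perform the cancellation against $\jcOp{\hat{H}}$ by hand, which is precisely what the paper's terse appeal to $\bra{}\cH=0$ amounts to.
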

\begin{proof}
	In full analogy to the case of rules without conditions~\cite{bdg2019}, the proof  follows from the BCH formula $e^{\lambda A}Be^{-\lambda A}=e^{ ad_{\lambda A}}(B)$ (for $A,B\in End_{\bR}(\hat{\bfC})$). Here, $ad_A^{\circ 0}(B):=B$, $ad_A(B):=AB-BA$  (also referred to as the \emph{commutator} $[A,B]$ of $A$ and $B$), and $ad_A^{\circ(q+1)}(B):=ad_A(ad_A^{\circ q}(B))$ for $q\geq 1$. Finally, the $q=0$ term in the above expression evaluates identically to $0$ due to $\bra{}\cH=0$.
\end{proof}

Combining this theorem with the notion of $\bT$-type jump-closure, one can in favorable cases express the EMGF evolution equation as a PDE on formal power series in $\lambda_1,\dotsc,\lambda_n$ and with $t$-dependent real-valued coefficients. Referring the interested readers to~\cite{bdg2019} for further details on this technique, let us provide here a simple non-trivial example of such a calculation.

\begin{example}\label{ex:ugModel}
    Let us consider a stochastic rewriting system over the category $\bfC=\mathbf{uGraph}$ of finite undirected multigraphs, with objects further constrained by the structure constraint $\ac{c}^S_{\mIO}:=\neg \exists(\mIO\hookrightarrow \inputtikz{ac1})\in\cond{\mathbf{uGraph}}$ that prohibits multiedges. %
    Let us consider for type $\bT=SqPO$ the four rules with conditions $R_{E_{\pm}}$ (edge-creation/-deletion) and $R_{V_{\pm}}$ (vertex creation/deletion), defined as 
\begin{equation*}
\begin{aligned}
        R_{E_{+}}&:=\tfrac{1}{2}\delta\left(\inputtikz{REp1}\hookleftarrow \inputtikz{REp2}\hookrightarrow \inputtikz{REp3};\neg \exists \left(\inputtikz{REp4}\hookrightarrow 
\inputtikz{REp5}\right)\right)\,, &
R_{V_{+}}&:=\delta(\inputtikz{RVp1}\hookleftarrow\mIO\hookrightarrow\mIO;\ac{true})\\
R_{E_{-}}&:=\tfrac{1}{2}\delta\left(\inputtikz{REm1}\hookleftarrow \inputtikz{REm2}\hookrightarrow 
\inputtikz{REm3};\ac{true}\right)\,,\quad &
R_{V_{-}}&:=\delta(\mIO\hookleftarrow\mIO\hookrightarrow
\inputtikz{RVm1};\ac{true})\,.
\end{aligned}
    \end{equation*}
Here, the prefactors $\tfrac{1}{2}$ for $R_{E_{\pm}}$ are chosen purely for convenience. Note that $R_{E_{+}}$ is the only rule requiring a non-trivial application condition, since linking two vertices with an edge might create a multiedge (precisely when the two vertices were already linked). Introducing base rates $\nu_{\pm},\varepsilon_{\pm}\in \bR_{>0}$ and letting $\hat{X}:=\rho(R_X)$,  we may assemble the infinitesimal generator $\cH$ of a CTMC as
\begin{equation}
\cH=\hat{H}+\jcOp{\hat{H}}\,,\;
\hat{H}:=\nu_{+}\hat{V}_{+}+\nu_{-}\hat{V}_{-}+\varepsilon_{+}\hat{E}_{+}+\varepsilon_{-}\hat{E}_{-}\,.
\end{equation}
One might now ask whether there is any interesting dynamical structure e.g.\ in the evolution of the moments of the observables that count the number of times each of the transitions of this system is applicable,
\begin{equation}
    O_{\bullet\vert \bullet}:=\jcOp{\delta(R_C)}\,,\;
    O_{\bullet\!-\! \bullet}:=\jcOp{\delta(R_D)}\,,\; O_{\bullet}:=\jcOp{\delta(R_{VD})}\,.
\end{equation}
The algebraic data necessary in order to formulate EMGF evolution equations are all 
\textbf{\emph{commutators}} of the observables with the contributions $\hat{X}:=\rho(\delta(R_{X}))$  to the ``off-diagonal part'' $\hat{H}$ of the infinitesimal generator $\cH$. We will present here for brevity just those commutators necessary in order to compute the evolution equations for the averages of the three observables:
\begin{equation}
\begin{aligned}
	[O_{\bullet},\hat{V}_{\pm}]&=\pm \hat{V}_{\pm}\,,\; & 	
	[O_{\bullet},\hat{E}_{\pm}]&=0 &&\\
	[O_{\bullet\vert \bullet},\hat{V}_{+}]&= \hat{A}\,, &
	[O_{\bullet\vert \bullet},\hat{V}_{-}]&= -\hat{B}\,,\; &
	[O_{\bullet\vert \bullet},\hat{E}_{\pm}]&= \mp \hat{E}_{\pm}\\
	[O_{\bullet\!-\! \bullet},\hat{V}_{+}]&=0\,, &
	[O_{\bullet\!-\! \bullet},\hat{V}_{-}]&= -\hat{C}\,,\; &
	[O_{\bullet\!-\! \bullet},\hat{E}_{\pm}]&= \pm \hat{E}_{\pm}
\end{aligned}
\end{equation}
As typical in these types of commutator computations, we find a number of contributions (here $\hat{A}$, $\hat{B}$ and $\hat{C}$) that were not either observables or based upon rules of the SRS:
\begin{equation*}
\begin{aligned}
\hat{A}&:=\rho\left(\delta\left(\inputtikz{RA1}\hookleftarrow \inputtikz{RA2}\hookrightarrow \inputtikz{RA3};\ac{true}\right)\right),\;
\hat{B}:=\rho\left(\delta\left(\inputtikz{RB1}\hookleftarrow \inputtikz{RB2}\hookrightarrow \inputtikz{RB3};\neg \exists \left(\inputtikz{RB4}\hookrightarrow \inputtikz{RB5}\right)\right)\right)\\
\hat{C}&:=\rho\left(\delta\left(\inputtikz{RC1}\hookleftarrow \inputtikz{RC2}\hookrightarrow \inputtikz{RC3};\ac{true}\right)\right),\; 
\jcOp{\hat{A}}=O_{\bullet}\,,\; \jcOp{\hat{B}}=2O_{\bullet\vert\bullet}\,,\;
\jcOp{\hat{C}}=2O_{\bullet\!-\!\bullet}
\end{aligned}
\end{equation*}
Picking for simplicity as an initial state $\ket{\Psi(0)}=\ket{\mIO}$ just the empty graph, and invoking the SqPO-type jump-closure property (cf.\ Theorem~\ref{thm:CTMCs}) repeatedly in order to evaluate $\langle [O_P,\hat{H}]\rangle(t)=\langle \jcOp{[O_P,\hat{H}]}\rangle(t)$, the moment EGF evolution equation~\eqref{eq:MEGFevo} specializes to the following ``Ehrenfest-like''~\cite{bdg2019} ODE system:
\begin{equation*}
\begin{aligned}
\tfrac{d}{dt}\langle O_{\bullet}\rangle(t)&=\langle [O_{\bullet},H]\rangle(t)=\nu_{+}-\nu_{-}\langle O_{\bullet}\rangle(t)\\
\tfrac{d}{dt}\langle O_{\bullet\vert\bullet}\rangle(t)&=\langle [O_{\bullet\vert\bullet},H]\rangle(t)
=\nu_{+}\langle O_{\bullet}\rangle(t)
-(2\nu_{-}+\varepsilon_{+})\langle O_{\bullet\vert\bullet}\rangle(t)
+\varepsilon_{-}\langle O_{\bullet\!-\!\bullet}\rangle(t)\\
\tfrac{d}{dt}\langle O_{\bullet\!-\!\bullet}\rangle(t)&=\langle [\langle O_{\bullet\!-\!\bullet}\rangle(t),H]\rangle(t)
=\varepsilon_{+}\langle O_{\bullet\vert\bullet}\rangle(t)
-(2\nu_{-}+\varepsilon_{-})\langle O_{\bullet\!-\!\bullet}\rangle(t)\\
\langle O_{\bullet}\rangle(0)&=\langle O_{\bullet\vert\bullet}\rangle(t)=\langle O_{\bullet\!-\!\bullet}\rangle(t)=0\,.
\end{aligned}
\end{equation*}
This ODE system may be solved exactly (see Appendix~\ref{app:se}). We depict in Figure~\ref{fig:evoEx} two exemplary evolutions of the three average pattern counts for different choices of parameters. Since due to SqPO-semantics the vertex deletion and creation transitions are entirely independent of the edge creation and deletion transitions, the vertex counts stabilize on a Poisson distribution of parameter $\nu_{+}/\nu_{-}$ (where we only present the average vertex count value here). As for the non-linked vertex pair and edge patter counts, the precise average values are sensitive to the parameter choices (i.e.\ whether or not vertices tend to be linked by an edge or not may be freely tuned in this model via adjusting the parameters).
\end{example}

\begin{figure}[t]\label{fig:evoEx}
 \centering
    \subfigure[\label{fig:a}Vertices tend to be linked.]{\includegraphics[width=0.45\textwidth]{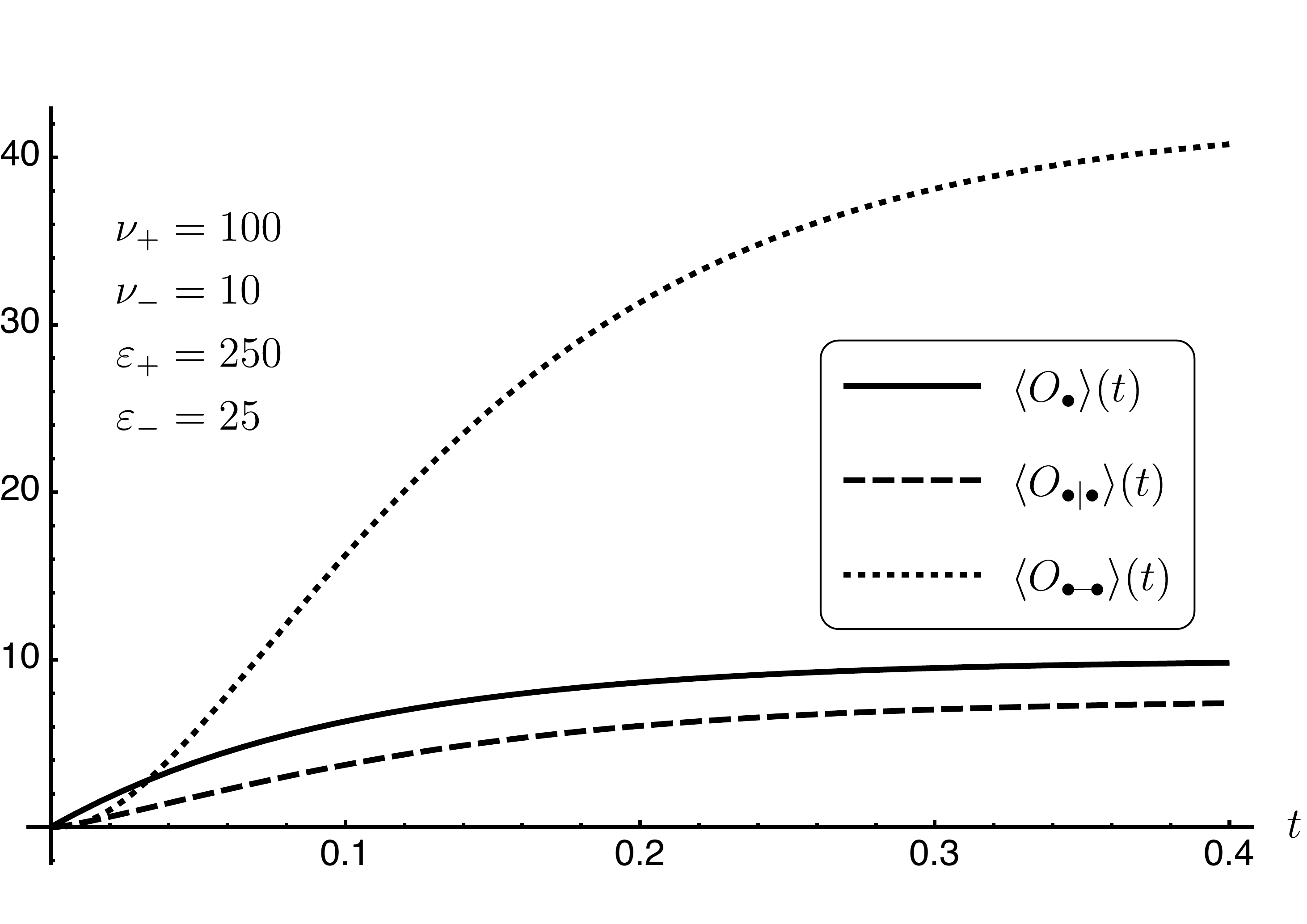}}
     \subfigure[\label{fig:b}Vertices tend to be unlinked.]{\includegraphics[width=0.45\textwidth]{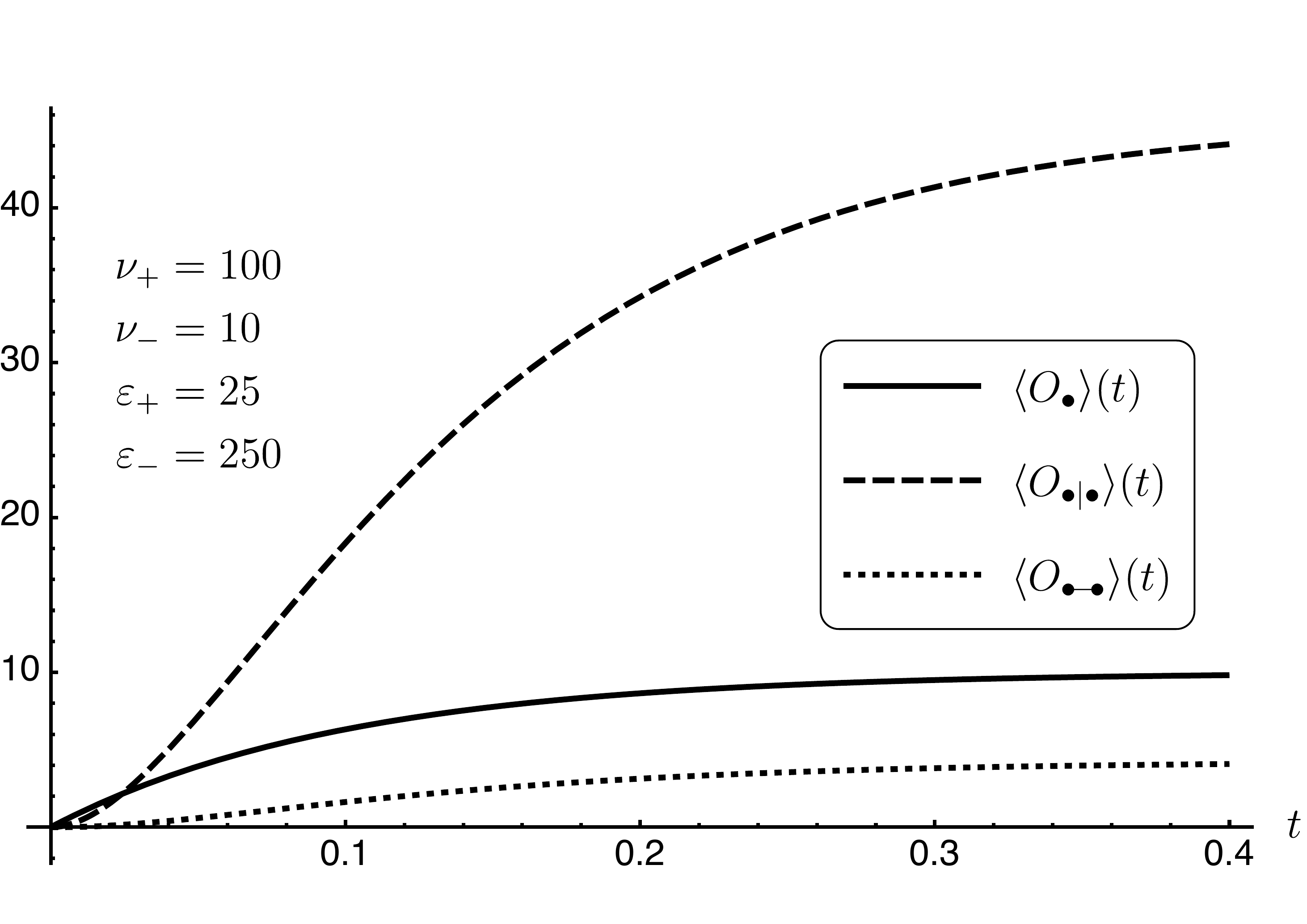}}
\caption{Time-evolutions of pattern count observables for different parameter choices.}
\end{figure}

While the example presented was chosen mainly to illustrate the computational techniques, it highlights the typical feature of the emergence of contributions in the relevant (nested) commutator calculations that may not have been included in the non-diagonal part $\hat{H}$ of the infinitesimal generator of the CTMC. We refer the interested readers to~\cite{bdg2019} for an extended discussion of this phenomenon, and for computation strategies for higher-order moment evolution equations.

\section{Application scenario 1: biochemistry with \KAP{}}\label{sec:bcgr}

The \href{https://kappalanguage.org}{\KAP{} platform}~\cite{danos2004computational,danos2004formal} for rule-based modeling of biochemical reaction systems is based upon the notion of so-called \emph{site-graphs} that abstract proteins and other complex macro-molecules into \emph{agents} (with \emph{sites} representing interaction capacities of the molecules). This open source platform offers a variety of high-performance \emph{simulation algorithms} (for CTMCs based upon \KAP{} rewriting rules) as well as several variants of static analysis tools to analyze and verify biochemical models~\cite{Boutillier:2018aa}. In view of the present paper, it is interesting to note that since the start of the \KAP{} development, the simulation-based algorithms have been augmented by \emph{differential semantics} modules aimed at deriving ODE systems for the evolution of pattern-count observable average values~\cite{danos2008rule,Danos_2010,danos2015moment,Harmer_2010}. In this section, we will experiment with a (re-)encoding of \KAP{} in terms of typed undirected graphs with certain structural constraints that permits to express such moment statistics ODEs via our general rule-algebraic stochastic mechanics formalism. We will then provide an illustrative exemplary computation of ODEs in order to point out certain intrinsic intricacies (notably non-closure properties) typical of such calculations. One of the key theoretical features of \KAP{} is its foundation upon the notion of \emph{rigidity}~\cite{Danos_2014}. In practice, the construction involves an \emph{ambient category} $\bA$ (which possesses suitable adhesivity properties), a \emph{pattern category} $\bP$ (obtained from $\bA$ via certain \emph{negative constraints}) and finally a \emph{state category} $\bS$ (obtained from $\bP$ via additional \emph{positive constraints}). We will now present one possible realization of \KAP{} based upon the $\cM$-adhesive category of typed undirected multigraphs:
\begin{definition}
For a \KAP{} model $K$, let $\bA=\mathbf{uGraph}/T_K$ be the category of finite undirected multigraphs typed over $T_K$, where $T_K$ distinguishes \emph{agent vertex types}, \emph{site vertex types} and three forms of \emph{edge types}: agent-site, site-site and loops on sites. For each agent type vertex $X\in\{A,B,\ldots\}$, the type graph contains the site type vertices $x_1:X,\dotsc,x_{n_X}:X$ (incident to the $X$-type vertex via an edge, and where $n_X<\infty$). $T_K$ also contains link type edges between sites that encode which sites can be linked, and loops on site type vertices that represent \emph{dynamic attributes}, such as the phosphorylation state of a site. Indicating the three different edge types by wavy lines (agent-site), solid lines (site-site) and dotted lines (property loops), the agent vertices with filled circles $\inputtikz{cNK0a}$ and the site vertices by open circles $\inputtikz{cNK0b}$, and using the placeholder $\bullet$ for a vertex and a dashed line for an edge of any type, we may introduce the \emph{negative constraints} defining the \emph{pattern category} $\bP_K$ as $\ac{c}_{\cN_K}:=\land_{N\in \cN_K}\neg\exists(\mIO\hookrightarrow N)$, with the set $\cN_K$ of \emph{``forbidden subgraphs''} defined as
\begin{equation}
	\cN_K:=\left\{
	\inputtikz{cNK1}\right\} \cup\bigcup\limits_x\left\{
	\inputtikz{cNK2}\right\}\cup \bigcup\limits_{X,x}\left\{
	\inputtikz{cNK3}\,,\;\inputtikz{cNK4}
	\right\}\,.
\end{equation}
Finally, the \emph{state category} $\bS_K$ is obtained from $\bP_K$ via imposing a positive constraint $\ac{c}_{\cP_K}$ that ensures that each agent $X$ is linked to exactly one of each of its site vertices $x:X$, and if a site $x:X$ can carry a property or alternative variants thereof, it also carries a loop that signifies one of these properties (see the example below for further details). Moreover, a given site $x:X$ must be linked to an agent $X$ (i.e.\ cannot occur in isolation).
\end{definition}
\begin{example}
	Consider a simple \KAP{} model with a type graph as below left that introduces two agent types $\mathsf{K}$ (for ``kinase'') and $\mathsf{P}$ (for ``protein''), where $\mathsf{K}$ has a site $k:\mathsf{K}$, and where $\mathsf{P}$ has sites $p_t,p_l,p_b:\mathsf{P}$. Moreover, the sites $p_t$ and $p_b$ can carry properties $\mathsf{u}$ (``unphosphorylated'') and $\mathsf{p}$ (``phosphorylated''), depicted as dotted loops in the type graph. Sites $k:\mathsf{K}$ and $p_l:\mathsf{P}$ can bind (as indicated by the solid line in the type graph).
\begin{equation*}
\begin{array}{cc|ccc|c}
\inputtikz{exK1} &\hphantom{x} &\hphantom{x} &
\begin{array}{rcl}
\mIO
	& \xrightleftharpoons[\;k_{-}\;]{k_{+}} &
\inputtikz{exK2}\\
\inputtikz{exK3}
	& \xrightleftharpoons[\;l_{-}\;]{l_{+}} &
\inputtikz{exK4}\\
 \inputtikz{exK5} &\xrightleftharpoons[\;t_{-}\;]{t_{+}} &
\inputtikz{exK6}\\
 \inputtikz{exK7} &\xrightleftharpoons[\;b_{-}\;]{b_{+}} &
\inputtikz{exK8}
\end{array} &\hphantom{x}&\hphantom{x}
\begin{array}{rcl}
\inputtikz{exK9}\text{$\!\!$} &\xleftharpoonup{\,r_{obs_K}\,}&
 \inputtikz{exK10}
\\
\\
\inputtikz{exK11} \text{$\,$} &\xleftharpoonup{\,r_{obs_P}\,}&
 \inputtikz{exK12}
\end{array}
\end{array} 
\end{equation*}	
As a prototypical example of a \KAP{} stochastic rewriting system, consider a system based upon the rewriting rules $k_{\pm}$, $l_{\pm}$, $t_{\pm}$ and $b_{\pm}$. Here, for the rule $l_{+}$, we have indicated that it must be equipped with an application condition that ensures that the site of the $\mathsf{K}$-type agent and the left site of the $\mathsf{P}$-type agent must be \emph{free} before binding. As common practice also in the standard \KAP{} theory, we otherwise leave in the graphical depictions those application conditions necessary to ensure consistent matches implicit as much as possible. Consider then for a concrete computational example the time-evolution of the average count of the pattern described in the identity rule $r_{obs_P}$. As typical in \KAP{} rule specifications $r_{obs_P}$ as well as several of the other rules depicted only explicitly involve \emph{patterns}, but not necessarily \emph{states}, since e.g.\ in $r_{obs_P}$ the left site of the $\mathsf{P}$-type agent is not mentioned. In complete analogy to the computation presented in Example~\ref{ex:ugModel}, let us first compute the commutators of the observable  $O_{\mathsf{K}}=\rho(\delta(r_{obs_K};\ac{c}_{obs_K}))$ with the operators $\hat{X}:=\rho(\delta(r_X;\ac{c}_X))$:
\begin{equation}
\begin{aligned}
	[O_{\mathsf{K}},\hat{K}_{\pm}]&=\pm \hat{K}_{\pm}\,,\; [O_{\mathsf{K}},\hat{L}_{\pm}]=[O_{\mathsf{K}},\hat{T}_{\pm}]=[O_{\mathsf{K}},\hat{B}_{\pm}]=0
\end{aligned}
\end{equation}
However, letting $O^{(\mathsf{x},\mathsf{y})}_{P}$, $O^{(\mathsf{x},\mathsf{y})}_{link}$ and $O^{(\mathsf{x},\mathsf{y})}_{free}$ denote the observables for the patterns
\[
\omega_P^{(\mathsf{x},\mathsf{y})}:=\inputtikz{commK1}\,,\quad \omega_{link}^{(\mathsf{x},\mathsf{y})}:=
\inputtikz{commK2}\,,\quad \omega_{free}^{(\mathsf{x},\mathsf{y})}:=
\inputtikz{commK3}
\]
one may easily demonstrate that even a comparatively simple observable such as $O^{(\mathsf{p},\mathsf{p})}_{P}$ already leads to an infinite cascade of  contributions to the ODEs for the averages of pattern counts. As typical in these sorts of computations, the discovery of a new pattern observable via applying SqPO-type jump-closure (Theorem~\ref{thm:CTMCs}) to the commutator contributions to $\tfrac{d}{dt}\langle O^{(\mathsf{p},\mathsf{p})}_{P}\rangle(t)$ leads to the discovery of new pattern observables yet again, such as in
\[
	[O_{\mathsf{P}},\hat{T}_{+}]=\hat{T}_{+}^{(\mathsf{p})}\,,\;
	\jcOp{\hat{T}_{+}^{(\mathsf{p})}}=O^{(\mathsf{u},\mathsf{p})}_{link}\,,\;
	[O^{(\mathsf{u},\mathsf{p})}_{link},\hat{L}_{+}]=\hat{L}^{(\mathsf{u},\mathsf{p})}\,,\;
	\jcOp{\hat{L}^{(\mathsf{u},\mathsf{p})}}=O^{(\mathsf{u},\mathsf{p})}_{free}\,.
\]
In particular the last observable $O^{(\mathsf{u},\mathsf{p})}_{free}$ is found to lead to an infinite tower of other observables (i.e.\ ``ODE non-closure''), starting from
\begin{equation*}
	[O^{(\mathsf{u},\mathsf{p})}_{free},\hat{L}_{+}]=-\hat{L}^{(\mathsf{u},\mathsf{p})}
	-\left(\inputtikz{commK4}\leftharpoonup \inputtikz{commK5}
	\right)-\left(\inputtikz{commK6}\leftharpoonup \inputtikz{commK7}
	\right)\,.
\end{equation*}
\end{example}

This exemplary and preliminary analysis reveals that while the rule-algebraic CTMC implementation is in principle applicable to the formulation and analysis  \KAP{}  systems, further algorithmic and theoretical developments will be necessary (including possibly ideas of \emph{fragments} and \emph{refinements} as in~~\cite{danos2008rule,Danos_2010,Harmer_2010}) in order to obtain a computationally useful alternative rewriting-theoretic implementation of \KAP{}.

\section{Application scenario 2: organic chemistry with \MOD{}}\label{sec:ocgr}

The \href{https://cheminf.imada.sdu.dk/mod/}{\MOD{} platform}~\cite{Andersen_2016} for organo-chemical reaction systems is a prominent example of a DPO-type rewriting theory of high relevance to the life sciences. From a theoretical perspective, \MOD{} has been designed~\cite{andersen2018rule} as a rewriting system over so-called \emph{chemical graphs}, a certain typed and undirected variant of the category $\mathbf{PLG}$ of partially labelled directed graphs. While the latter category had been introduced in~\cite{Habel_2012} as a key example of an $\cM$-$\cN$-adhesive category, with the motivation of permitting label-changes in rewriting rules, it was also demonstrated in loc cit.\ that $\mathbf{PLG}$ is \emph{not} $\cM$-adhesive. Since moreover no concrete construction of a tentative variant $\mathbf{uPLG}$ of $\mathbf{PLG}$ for undirected graphs, let alone results on the possible adhesivity properties of such a category are known in the literature, we propose here an alternative and equivalent encoding of chemical graphs. We mirror the constructions of~\cite{Andersen_2016,andersen2018rule} in that chemical graphs will be a certain typed variant of undirected graphs, with \emph{vertex types} representing \emph{atom types}, \emph{edge types} ranging over the types $\{\mathtt{-},\mathtt{=},\mathtt{\#},\mathtt{:}\}$ representing \emph{single, double,triple and aromatic bonds}, respectively, and with the graphs being required to not contain multiedges. Inspired by the \KAP{} constructions in the previous section, we opt to represent \emph{properties} (such as e.g.\ \emph{charges} on atoms) as \emph{typed loop edges} on vertices representing atoms, whence the change of a property (which was the main motivation in~\cite{andersen2018rule} for utilizing a variant of $\mathbf{PLG}$) may be encoded in a rewriting rule simply via deletion/creation of property-encoding loops. Unfortunately, while the heuristics presented thus far would suggest that chemical graphs in the alternative categorical setting should be just simple typed undirected graphs, the full specification of chemical graphs would also have to include additional, empirical information from the chemistry literature. Concretely, atoms such as e.g.\ carbon only support a limited variety of bond types and configurations of incident bonds (referred to as \emph{valencies}), with additional complications such as poly-valencies possible for some types of atoms as illustrated by the following example.

\begin{example} The \emph{Meisenheimer-2-3-rearrangement} reaction~\cite{March2001} (cf.\ also~\cite{Andersen_2017}) constitutes an example\footnote{This example reaction was typeset directly via \MOD{} (cf.\ Appendix~\ref{app:ME}).} of a reaction where \emph{polyvalence} is encountered:
\begin{equation}
\includegraphics[width=0.8\textwidth]{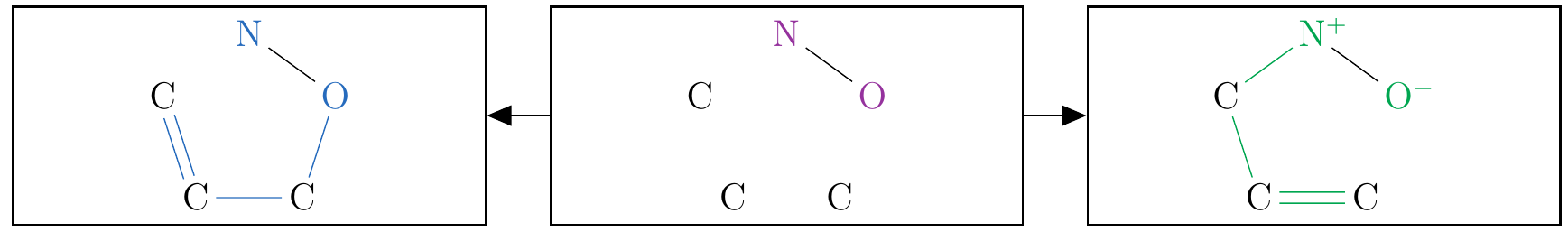}
\end{equation}
Upon matching this rule into a chemically valid mixture, the $N$ atom on the input of the rule will have valence $5$, while on the output it will have valence $3$. This type of information is evidently in no way contained in the chemical graphs alone, and must therefore be encoded in terms of suitable additional typing on the graphs and application conditions.
\end{example}

While thus at present no encoding of chemical graphs into a categorical framework with suitable adhesivity properties is available, we posit that it would be highly fruitful in light of the stochastic mechanics framework presented in this paper to develop such an encoding (joint work in progress with J.L.\ Andersen, W.\ Fontana and D.\ Merkle).

\section{Conclusion and outlook}

Rewriting theories of DPO- and SqPO-type for rules with conditions over $\cM$-adhesive categories are poised to provide a rich theoretical and algorithmic framework for modeling stochastic dynamical systems in the life sciences. The main result of the present paper consists in the introduction of a \emph{rule algebra framework} that extends the pre-existing constructions~\cite{bp2018,nbSqPO2019,bdg2016} precisely via incorporating the notion of conditions. The sophisticated \KAP{}~\cite{Boutillier:2018aa} and \MOD{}~\cite{Andersen_2016} bio-/organo-chemistry platforms and related developments have posed one of the main motivations for this work. For both of these platforms, we present a first analysis and stepping stones towards bridging category-theoretical rewriting theories and stochastic mechanics computations. Especially for the organo-chemistry setting, our work motivates the development of a full encoding  of (at least a reasonable fragment of) organic chemistry in terms of \emph{chemical graphs} and rewriting rules thereof, which to date is still unavailable. This encoding will be beneficial also in the development of tracelet-based techniques~\cite{behr2019tracelets}, and is current work in progress.

An intriguing perspective for future developments in categorical rewriting theory consists in developing a robust and versatile methodology for the analysis of ODE systems of pattern-counting observables in stochastic rewriting systems. %
While the results of this paper permit to formulate dynamical evolution equations for arbitrary higher moments of such observables, in general cases (as illustrated in Section~\ref{sec:bcgr}) the non-closure of the resulting ODE systems remains a fundamental technical challenge. In the \KAP{} literature, sophisticated conceptual and algorithmic approaches to tackle this problem have been developed such as refinements~\cite{danos2008rule,Danos_2014}, model reduction techniques~\cite{Danos_2010} and stochastic fragments~\cite{ferethal00975861} (see also~\cite{bdg2019} for an extended discussion). We envision that a detailed understanding of these approaches from within the setting of   categorical rewriting and of rule algebra theory could provide a very fruitful enrichment of the methodology of rewriting theory.



\clearpage
\appendix

\section{Background material on adhesive categories and rewriting with conditions}\label{sec:appendixA}

As a reference for notational conventions and in order to recall some of the standard definitions necessary in the main text, we collect here some of the materials contained in our recent paper~\cite{behrRaSiR} for the readers' convenience.

\subsection{$\cM$-adhesive categories}\label{sec:MACapp}

\begin{definition}
An $\cM$-adhesive category~\cite{ehrig2010categorical} $(\bfC,\cM)$ is a category $\bfC$ together with a class of monomorphisms $\cM$ that satisfies the following properties:
\begin{enumerate}
\item $\bfC$ has pushouts and pullbacks along\footnotemark $\cM$-morphisms.
\item The class $\cM$ contains all isomorphisms and is stable under pushout, pullback and composition.
\item Pushouts along $\cM$-morphisms are $\cM$-van Kampen squares.
\end{enumerate}
\begin{minipage}[t]{0.6\linewidth}
The latter property entails that in a commutative diagram such as the one on the right where the bottom square is a pushout along an $\cM$-morphism, where the back and right faces pullbacks and where all vertical morphisms are in $\cM$, the bottom square is $\cM$-van Kampen if the following property holds: the top square is a pushout if and only if the front and left squares are pullbacks.
\end{minipage}%
\begin{minipage}[t]{0.4\linewidth}
\null\hfill\\[-\dimexpr\baselineskip+1.2em\relax]
\centering
\text{$\;$}\includegraphics{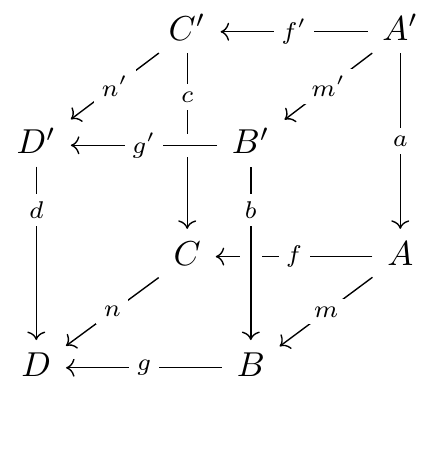}
\end{minipage}
\end{definition}
\footnotetext{Here, ``along'' entails that at least one of the two morphisms involved in the relevant (co-)span is in $\cM$.}%
Throughout the following definitions, let $(\bfC,\cM)$ be an $\cM$-adhesive category.
\begin{definition}
    $(\bfC,\cM)$ is said to be \emph{finitary}~\cite{GABRIEL_2014} if every object has only finitely many $\cM$-subobjects up to isomorphism.
\end{definition}
\begin{definition}
    $(\bfC,\cM)$ possesses an $\cM$-initial object $\mIO$~\cite{GABRIEL_2014} if for all objects $X\in \obj{\bfC}$ there exists a unique $\cM$-morphism $\iota_X:\mIO\hookrightarrow X$.
\end{definition}
\begin{definition}
    $(\bfC,\cM)$ possesses an \emph{epi-$\cM$-factorization}~\cite{habel2009correctness} if every morphism $f\in \mor{\bfC}$ factorizes as $f=m\circ e$ with $m\in \cM$ and with $e\in \epi{\bfC}$ an epimorphism, and such that this factorization is unique up to isomorphism.
\end{definition}
\begin{definition}
    $(\bfC,\cM)$ has \emph{$\cM$-effective unions} if for every cospan $(B\hookrightarrow D\hookleftarrow C)$ of $\cM$-morphisms that is the pushout of a span $(B\hookleftarrow A\hookrightarrow C)$, the following property holds: for every cospan $(B\hookrightarrow E\hookleftarrow C)$ whose pullback is given by $(B\hookleftarrow A\hookrightarrow C)$, the morphism $D\rightarrow E$ that exists by universal property of the pushout is in $\cM$.
\end{definition}

We next recall the notion of final pullback complements that is an important technical ingredient of the theory of SqPO-rewriting.

\begin{definition}
    Let $(b,a)$ be a composable pair of morphisms in a category $\bfC$. Then a pair of morphisms $(c,d)$ is called a \emph{final pullback complement (FPC)}~\cite{Corradini_2006} if $(a,d)$ is the pullback of $(b,c)$, and if for every $(a\circ p,q)$ that is the pullback of $(b,r)$, there exists a morphism $s$ such that $r=c\circ s$ that is unique up to isomorphism.
    \[
        \includegraphics{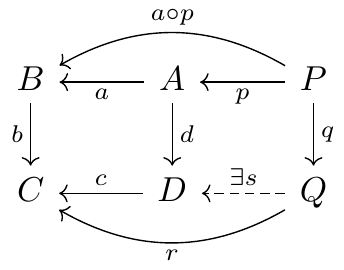}
    \]
\end{definition}

\begin{definition}
    The class of monomorphisms $\cM$ of $(\bfC,\cM)$ is said to be \emph{stable under FPCs}~\cite{behrRaSiR} if for every pair $(b,a)$ of composable $\cM$-morphisms the FPC $(c,d)$ (if it exists) is a pair of $\cM$-morphisms.
\end{definition}

\subsection{Conditions}\label{sec:condApp}

\begin{definition}
    \emph{Conditions}\cite{habel2009correctness,ehrig2014mathcal} in an $\cM$-adhesive category $(\bfC,\cM)$ satisfying Assumption~\ref{as:main} are recursively defined for every object $X\in \obj{\bfC}$ as follows:
    \begin{enumerate}
        \item $\ac{true}_X$ is a condition.
        \item Given $(f:X\hookrightarrow Y)\in \cM$ and a condition $\ac{c}_Y$, $\exists(f,\ac{c}_Y)$ is a condition.
        \item If $\ac{c}_X$ is a condition, so is $\neg \ac{c}_X$.
        \item If $\ac{c}_X^{(1)},\ac{c}_X^{(2)}$ are conditions, so is $\ac{c}_X^{(1)}\land \ac{c}_X^{(2)}$.
    \end{enumerate}
    The \emph{satisfaction} of a condition $\ac{c}_X$ by a $\cM$-morphism $(h:X\hookrightarrow Z)\in \cM$, denoted $h\vDash \ac{c}_X$, is recursively defined (with notations as above) as follows:
    \begin{minipage}[t]{0.6\textwidth}
    \begin{enumerate}
        \item $h\vDash\ac{true}_X$.
        \item $h\vDash \exists(f,\ac{c}_Y)$ iff there exists an $\cM$-morphism $(g:Y\hookrightarrow Z)\in \cM$ such that $h=g\circ f$ and $g\vDash Y$.
        \item $h\vDash \neg\ac{c}_X$ iff $h\not{\vDash} \ac{c}_X$.
        \item $h\vDash (\ac{c}_X^{(1)}\land \ac{c}_X^{(2)})$ iff $h\vDash \ac{c}_X^{(1)}$ and $h\vDash \ac{c}_X^{(2)}$.
    \end{enumerate}
    \end{minipage}
    \begin{minipage}[t]{0.4\textwidth}
    \null\hfill\\[-\dimexpr\baselineskip+0em\relax]
    \centering
    \includegraphics{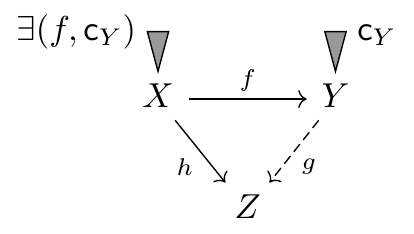}
    \end{minipage}
    \medskip

    Two conditions $\ac{c}_X$ and $\ac{c}_X'$ are \emph{equivalent}, denoted $\ac{c}_X\equiv \ac{c}_X'$, iff for every $\cM$-morphism $(h:X\hookrightarrow Z)\in \cM$, $h\vDash \ac{c}_X$ if and only if $h\vDash\ac{c}_X'$.
    \medskip

    Finally, a condition $\ac{c}_{\mIO}$ over the $\cM$-initial object $\mIO$ is called a \emph{constraint}, and we define for every object $Z\in \obj{\bfC}$
    \begin{equation}
        Z\vDash \ac{c}_{\mIO} \quad :\Leftrightarrow \quad (\mIO\hookrightarrow Z)\vDash \ac{c}_{\mIO}\,.
    \end{equation} 
\end{definition}

\begin{theorem}[\cite{habel2009correctness}; cf.\ also~\cite{behrRaSiR}]
    In an $\cM$-adhesive category satisfying Assumption~\ref{as:main}, there exists a \emph{shift operation}, denoted $\Shift$, such that for all conditions $\ac{c}_X$ and for all $\cM$-morphisms $(f:X\hookrightarrow Y)\in \cM$, $(g:Y\hookrightarrow Z)\in \cM$ and $(h:X\hookrightarrow Z)\in \cM$ with $h=g\circ f$, the following property holds:
    \begin{equation}
        h\vDash \ac{c}_X\quad \Leftrightarrow \quad g\vDash \Shift(f,\ac{c}_X)
    \end{equation}
\end{theorem}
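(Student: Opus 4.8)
The plan is to define $\Shift$ by structural recursion on the condition $\ac{c}_X$ and to prove the equivalence by a parallel induction. For the atomic condition set $\Shift(f,\ac{true}_X):=\ac{true}_Y$, and let $\Shift$ commute with the Boolean connectives, i.e.\ $\Shift(f,\neg\ac{c}_X):=\neg\,\Shift(f,\ac{c}_X)$ and $\Shift(f,\ac{c}_X^{(1)}\land\ac{c}_X^{(2)}):=\Shift(f,\ac{c}_X^{(1)})\land\Shift(f,\ac{c}_X^{(2)})$. For these cases the equivalence $h\vDash\ac{c}_X\Leftrightarrow g\vDash\Shift(f,\ac{c}_X)$ is immediate from the inductive hypothesis together with the recursive clauses defining satisfaction of $\neg$ and $\land$; the base case is trivial since both sides always hold.

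The substance of the construction lies in the existential case $\ac{c}_X=\exists(a:X\hookrightarrow X',\ac{c}_{X'})$. Here I would set $\Shift(f,\ac{c}_X)$ to be the finite disjunction $\bigvee \exists(b:Y\hookrightarrow Y',\Shift(a',\ac{c}_{X'}))$ ranging over (isomorphism classes of) all \emph{jointly epimorphic completions} of the span $X'\hookleftarrow X\hookrightarrow Y$, that is, over all commuting squares supplied by a pair $b:Y\hookrightarrow Y'$, $a':X'\hookrightarrow Y'$ with $b,a'\in\cM$, $b\circ f=a'\circ a$, and $(b,a')$ jointly epic. Finitarity guarantees that only finitely many such completions exist up to isomorphism, so the disjunction is a bona fide condition (the empty disjunction being read as $\ac{false}$), and the recursion descends along the $\cM$-morphism $a'$ to the strictly smaller subcondition $\ac{c}_{X'}$.

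For the equivalence in this case, write $h=g\circ f$ with $h,g,f\in\cM$. In the ``$\Rightarrow$'' direction a witness $q:X'\hookrightarrow Z$ with $h=q\circ a$ and $q\vDash\ac{c}_{X'}$ gives a commuting square $g\circ f=q\circ a$; applying the epi-$\cM$-factorization (in its pair-factorization form) to the cospan morphisms $(g,q)$ yields one of the completions $(b,a')$ above together with an $\cM$-morphism $e:Y'\hookrightarrow Z$ satisfying $g=e\circ b$ and $q=e\circ a'$, whence the inductive hypothesis gives $e\vDash\Shift(a',\ac{c}_{X'})$ and $g$ satisfies the corresponding disjunct. In the ``$\Leftarrow$'' direction a satisfied disjunct is witnessed by $e:Y'\hookrightarrow Z$ with $g=e\circ b$ and $e\vDash\Shift(a',\ac{c}_{X'})$; the inductive hypothesis yields $q:=e\circ a'\vDash\ac{c}_{X'}$, and $h=g\circ f=e\circ b\circ f=e\circ a'\circ a=q\circ a$ exhibits $q$ as the required witness for $h\vDash\exists(a,\ac{c}_{X'})$. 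The main obstacle is precisely the forward direction of the existential case: one must show that \emph{every} witness $q$ over the arbitrary target $Z$ factors through one of the finitely many jointly-epi completions, and this is exactly where the epi-$\cM$-factorization and finitarity hypotheses of Assumption~\ref{as:main} are indispensable. The remaining bookkeeping---keeping all auxiliary morphisms in $\cM$ via stability of $\cM$ under composition and pullback---is routine.
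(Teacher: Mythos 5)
Your construction is the standard one from the literature this theorem is imported from: the paper itself gives no proof (it states the result as a citation of \cite{habel2009correctness}, deferring the $\cM$-restricted implementation to \cite{behrRaSiR}), and that proof is precisely a structural induction in which $\Shift$ commutes with $\ac{true}$, $\neg$, $\land$, and sends $\exists(a,\ac{c}_{X'})$ to a disjunction over jointly epimorphic completions of the span $X'\hookleftarrow X\hookrightarrow Y$, with the forward direction of the existential case settled by a pair factorization. So in outline you have reproduced the intended argument, and your backward direction is complete as written.

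Two steps, however, are asserted rather than proved, and one is a genuine gap. First, the ``pair-factorization form'' of the epi-$\cM$-factorization is not among the stated axioms: you must construct it, e.g.\ by forming the coproduct $Y+X'$ as the pushout of $Y\hookleftarrow\mIO\hookrightarrow X'$ (using the $\cM$-initial object and pushouts along $\cM$-morphisms), epi-$\cM$-factoring the induced morphism $[g,q]=e_Z\circ \varepsilon$, and then justifying that $b:=\varepsilon\circ \mathrm{in}_Y$ and $a':=\varepsilon\circ \mathrm{in}_{X'}$ lie in $\cM$. The paper's definition of $\cM$-adhesivity does \emph{not} include the decomposition axiom ($e\circ b\in\cM$ and $e\in\cM$ imply $b\in\cM$), so this needs the short argument that $(\mathrm{id}_Y,b)$ is a pullback of the cospan $(g,e_Z)$, whence $b\in\cM$ by stability of $\cM$ under pullback; your phrase ``stability under composition and pullback'' names the right ingredient, but the derivation should be exhibited. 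Second, and more seriously, finiteness of the disjunction does not follow from finitarity as literally defined: finitarity bounds the $\cM$-subobjects of a fixed object, whereas your disjunction ranges over cospans $(Y\hookrightarrow Y'\hookleftarrow X')$, and $Y'$ is not a subobject of anything in sight. The known route is to show that every jointly-epic completion is, up to isomorphism, an epi-quotient of the pushout $Y+_X X'$, or---using $\cM$-effective unions (an assumption you never invoke) together with the fact that in an $\cM$-adhesive category an epic $\cM$-morphism is an isomorphism---that each completion is the pushout of its own pullback span, so that completions biject with certain $\cM$-spans; one then needs the finiteness results for finitary $\cM$-adhesive categories established in \cite{GABRIEL_2014} and \cite{behrRaSiR}. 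This matters because the paper's syntax of conditions admits only binary conjunction, so $\Shift(f,\ac{c}_X)$ is a condition at all only if the disjunction is finite; Habel and Pennemann sidestep this by allowing infinite conjunctions, an option not available in the present setting.
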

We refer the interested readers to~\cite{behrRaSiR} for further details on the concrete implementation of the shift construction.

\subsection{Associativity and concurrency theorems}\label{app:ACthms}

In the statements of the following two theorems, we always imply choosing concrete representatives of the relevant equivalence classes of rules with conditions in order to list the sets of admissible matches.

\begin{theorem}[Associativity of rule compositions~\cite{bp2018,nbSqPO2019,behrRaSiR}]\label{thm:assocR}
    Let $\bfC$ be a category satisfying Assumption~\ref{as:main}. let $R_1,R_2,R_3\in \LinAc{\bfC}$ be linear rules with conditions, and let $\bT\in\{DPO,SqPO\}$. Then there exists a bijection $\varphi:A\xrightarrow{\cong} B$ of sets of pairs of $\bT$-admissible matches $A$ and $B$, defined as
    \begin{equation}
        \begin{aligned}
            A&:=\{(\mu_{21},\mu_{3(21)})\mid \mu_{21}\in 
            \MatchGT{\bT}{R_2}{R_1}\,,\; \mu_{3(21)}\in \MatchGT{\bT}{R_3}{R_{21}}\}\\
            B&:=\{(\mu_{32},\mu_{(32)1})\mid 
            \mu_{32}\in\MatchGT{\bT}{R_3}{R_2}\,,\;
             \mu_{(32)1}\in \MatchGT{\bT}{R_{32}}{R_1}\}\,,
        \end{aligned}
        \end{equation}
        where $R_{21}=\compGT{\bT}{R_2}{\mu_{21}}{R_1}$ and $R_{32}=\compGT{\bT}{R_3}{\mu_{32}}{R_2}$, such that for each corresponding pair $(\mu_{21},\mu_{3(21)})\in A$ and %
        $\varphi(\mu_{21},\mu_{3(21)})=(\mu_{32}',\mu_{(32)1}')\in B$, 
        \begin{equation}
            \compGT{\bT}{R_3}{\mu_{3(21)}}{\left(\compGT{\bT}{R_2}{\mu_{21}}{R_1}\right)}\cong
            \compGT{\bT}{\left(\compGT{\bT}{R_3}{\mu_{32}'}{R_2}\right)}{\mu_{(32)1}'}{R_1}\,.
        \end{equation}
    In this particular sense, the composition operations $\compGT{\bT}{.}{.}{.}$ are \textbf{associative}.
\end{theorem}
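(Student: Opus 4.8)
The plan is to separate the statement into two layers: a bijection on the underlying \emph{plain} rules (the spans $O\leftharpoonup I$ together with their matches, forgetting the conditions) and a verification that the attached conditions are transported coherently. For the first layer I would invoke the classical associativity result for plain rule compositions --- for $\bT=DPO$ this is the concurrency/associativity theory recalled in \cite{ehrig:2006fund,bp2018}, and for $\bT=SqPO$ it is established in \cite{nbSqPO2019}. This already furnishes a bijection $\varphi$ between the sets of composable pairs of plain matches, together with the isomorphism of the two resulting composite plain rules. The genuinely new content lies in the second layer: showing that, under this bijection, the condition attached to $\compGT{\bT}{R_3}{\mu_{3(21)}}{(\compGT{\bT}{R_2}{\mu_{21}}{R_1})}$ agrees up to $\equiv$ with the condition attached to the re-associated composite, and that the non-triviality admissibility clauses match.

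First I would fix concrete representatives and assemble the full triple-composition diagram. By pasting the two successive overlap constructions, both orders of association determine --- up to the canonical isomorphism supplied by the plain-rule result --- the same ``total'' input object, which I will call $I_{321}$, equipped with $\cM$-morphisms $I_1\hookrightarrow I_{321}$ and $I_2\hookrightarrow I_{321}$ and with the span arrows recording the embedding of the track of $I_3$. The strategy is then to rewrite each of the two composite conditions into a common normal form: a conjunction of three pieces, one transported from each of $\ac{c}_{I_1}$, $\ac{c}_{I_2}$, $\ac{c}_{I_3}$ along the corresponding canonical morphism or span into $I_{321}$. If both associations reduce to the same normal form, equivalence follows immediately.

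The technical engine for this normalization consists of compatibility lemmas for the two operations building the composite conditions, namely $\Shift$ and $\Trans$. I would establish (or cite from \cite{behrRaSiR}): (i) functoriality of $\Shift$ under composition of $\cM$-morphisms, $\Shift(g\circ f,\ac{c})\equiv\Shift(g,\Shift(f,\ac{c}))$; (ii) compatibility of $\Trans$ with span composition, so that transporting along a composite span equals the two successive transports; and (iii) the ``commutation'' identities relating $\Shift$ and $\Trans$ across the pushout squares (and, for $\bT=SqPO$, the final-pullback-complement squares) of the overlap diagram, which rest on the van Kampen property and the $\cM$-effective unions granted by Assumption~\ref{as:main}. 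Applying these rewrites, the nested $\Shift$/$\Trans$ expressions defining the two composite conditions collapse to the same conjunction of transported conditions, yielding their equivalence.

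The hard part will be step (iii): the coherent interaction of $\Trans$ with the composition squares, since $\Trans$ transports a condition across the span arrow $N_{21}\leftharpoonup I_{21}$, i.e.\ in the direction opposite to $\Shift$, and its behaviour under pasting of pushouts is delicate. The $SqPO$ case compounds this, because the relevant square is a final pullback complement rather than a pushout, so one cannot directly reuse the van-Kampen-based arguments and must instead appeal to the FPC-stability and $\cM$-stability hypotheses of Assumption~\ref{as:main}. Once the condition equivalence is secured, the admissibility clauses are automatic: a pair in $A$ survives the non-triviality test $\ac{c}_{I_{21}}\not\equiv\ac{false}$ (and the downstream test) exactly when its $\varphi$-image does, because $\equiv$-equivalent conditions are simultaneously equivalent to $\ac{false}$ or not. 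Combining the plain-rule bijection with this condition-level matching then upgrades $\varphi$ to the asserted bijection $A\xrightarrow{\cong}B$ that is compatible with the composite rules.
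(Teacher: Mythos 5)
You should first note an unusual feature of the target: the paper itself contains no proof of Theorem~\ref{thm:assocR}. It is recalled in Appendix~\ref{app:ACthms} as a result imported from~\cite{behrRaSiR} (with the plain DPO-type statement noted as classical, cf.~\cite{ehrig:2006fund}), so the only benchmark is the proof in that cited companion paper. Measured against it, your outline reconstructs essentially the same strategy: first, the bijection on composable pairs of matches and the isomorphism of composite \emph{plain} rules is taken from the condition-free associativity theorems of~\cite{bp2018,bp2019-ext} (DPO) and~\cite{nbSqPO2019} (SqPO); second, the genuinely new content is the equivalence of the two nested $\Shift$/$\Trans$ expressions defining the composite conditions, which in~\cite{behrRaSiR} is indeed reduced to compositionality of $\Shift$ along composed $\cM$-morphisms, compositionality of $\Trans$, and compatibility lemmas for $\Shift$ and $\Trans$ across the pushout (respectively FPC) squares of the composition diagram. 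You also correctly locate the hard part in exactly those compatibility lemmas, and correctly flag that the SqPO case needs the FPC-stability clauses of Assumption~\ref{as:main} rather than van-Kampen-style arguments alone.

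Two caveats. First, your text is a plan rather than a proof: the lemmas you propose to ``establish (or cite)'' \emph{are} the proof, so as a self-contained argument it is incomplete --- though deferring them to~\cite{behrRaSiR} is no worse than what the paper itself does. Second, the admissibility matching is not quite ``automatic'' from the equivalence of the two overall composite conditions: membership in $A$ also requires $\ac{c}_{I_{21}}\not\equiv\ac{false}$ and membership in $B$ requires $\ac{c}_{I_{32}}\not\equiv\ac{false}$, and these two intermediate conditions live on different objects and are not equivalent to each other, so your remark that ``$\equiv$-equivalent conditions are simultaneously false or not'' does not apply to them directly. What is needed in addition is the (easy, but necessary) observation that $\Shift$ and $\Trans$ send conditions equivalent to $\ac{false}$ to conditions equivalent to $\ac{false}$, so that non-falseness of the overall condition forces non-falseness of the intermediate one on each side; with that supplied, the restriction of the plain-match bijection to $A$ and $B$ is indeed well defined and bijective.
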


\begin{theorem}[Concurrency theorem~\cite{bp2018,nbSqPO2019,behrRaSiR}]\label{thm:concur}
    Let $\bfC$ be a category satisfying Assumption~\ref{as:main}, and let $\bT\in\{DPO,SqPO\}$. Then there exists a bijection $\varphi:A\xrightarrow{\cong}B$ on pairs of $\bT$-admissible matches between the sets $A$ and $B$,
    \begin{equation}
        \begin{aligned}
            A&=\{(m_2,m_1)\mid m_1\in \MatchGT{\bT}{R_1}{X_0}\,,; 
            m_2\in \MatchGT{\bT}{R_2}{X_1}\}\\
            \cong\quad 
            B&=\{(\mu_{21},m_{21})\mid \mu_{21}\in \MatchGT{\bT}{R_2}{R_1}\,,\; m_{21}\in \MatchGT{\bT}{R_{21}}{X_0}\}\,,
        \end{aligned}
        \end{equation}
    where $X_1=R_{1_{m_1}}(X_0)$ and $R_{21}=\compGT{\bT}{R_2}{\mu_{21}}{R_1}$ such that for each corresponding pair $(m_2,m_1)\in A$ and $(\mu_{21},m_{21})\in B$, it holds that
        \begin{equation}
            R_{21_{m_{21}}}(X_0) \cong
            R_{2_{m_2}}(R_{1_{m_1}}(X_0))\,.
        \end{equation}
\end{theorem}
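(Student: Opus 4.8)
The plan is to exhibit the bijection $\varphi:A\xrightarrow{\cong}B$ explicitly in both directions and to verify that the two assignments are mutually inverse up to the universal isomorphisms inherent in the underlying pushout, pushout-complement and (in the SqPO case) final-pullback-complement constructions. The overall skeleton is that of the classical DPO concurrency theorem, the new ingredient being the bookkeeping of application conditions via the shift and transport operations.

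\emph{Synthesis ($A\to B$).} Starting from a pair $(m_2,m_1)\in A$, i.e.\ a two-step derivation $X_0\Rightarrow X_1\Rightarrow X_2$ with $X_1=R_{1_{m_1}}(X_0)$, the first step supplies a comatch $O_1\hookrightarrow X_1$ and the second step a match $m_2:I_2\hookrightarrow X_1$. I would define the overlap span $\mu_{21}=(I_2\hookleftarrow M_{21}\hookrightarrow O_1)$ as the pullback of the cospan $(I_2\hookrightarrow X_1\hookleftarrow O_1)$; stability of $\cM$ under pullback guarantees that $\mu_{21}$ is an $\cM$-span. Forming the composite rule $\compGT{\bT}{R_2}{\mu_{21}}{R_1}$ according to Definition~\ref{def:Rcomp} and pasting the two derivation diagrams along their common intermediate object $X_1$, I would then read off a single match $m_{21}:I_{21}\hookrightarrow X_0$ and verify, using pushout composition, that applying $R_{21}$ along $m_{21}$ reproduces $X_2$.

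\emph{Analysis ($B\to A$).} Conversely, from $(\mu_{21},m_{21})\in B$ I would apply $R_{21}$ to $X_0$ in one step. The internal structure of $R_{21}$ --- which retains the trace of both $R_1$ and $R_2$ glued along $\mu_{21}$ --- lets me cut the single derivation diagram along the intermediate object $N_{21}$, recovering a candidate intermediate state $X_1$. The workhorse here is the pushout--pullback decomposition available in any $\cM$-adhesive category via the van Kampen property: it guarantees that each half of the cut is again a valid direct derivation, so that the cut produces matches $m_1,m_2$ with $X_1=R_{1_{m_1}}(X_0)$ and $R_{2_{m_2}}(X_1)\cong X_2$. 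In the SqPO setting the pushout-complement cut is replaced by an FPC decomposition, which is where the hypotheses that FPCs exist and that $\cM$ is stable under FPCs become essential.

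Having checked that synthesis and analysis are mutually inverse up to isomorphism, the remaining and most delicate point is the compatibility of the \emph{application conditions} under $\varphi$: the pair of local admissibility requirements $m_1\vDash\ac{c}_{I_1}$ and $m_2\vDash\ac{c}_{I_2}$ (the latter as a match into $X_1$) must hold if and only if $m_{21}\vDash\ac{c}_{I_{21}}$ for the composite condition $\ac{c}_{I_{21}}=\Shift(I_1\hookrightarrow I_{21},\ac{c}_{I_1})\land\Trans(N_{21}\leftharpoonup I_{21},\Shift(I_2\hookrightarrow N_{21},\ac{c}_{I_2}))$ of Definition~\ref{def:Rcomp}. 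I expect this to be the main obstacle, and I would establish it by invoking the defining property of $\Shift$ (which preserves satisfaction along $\cM$-morphisms) to relate $\ac{c}_{I_1}$ to its shift along $I_1\hookrightarrow I_{21}$, and the corresponding property of the transport operation $\Trans$ to move the shifted condition on $I_2$ backwards along $N_{21}\leftharpoonup I_{21}$, thereby correctly accounting for the change of context induced by first applying $R_1$. For the SqPO case this condition-tracking has to be combined with the more subtle FPC-based decomposition, making it the technically heaviest part of the argument.
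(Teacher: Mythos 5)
Note first that the paper itself contains no proof of this theorem: it is recalled verbatim as background material and delegated to the cited works~\cite{bp2018,nbSqPO2019,behrRaSiR}, with the DPO case flagged as classical. Your synthesis/analysis plan --- overlap span via pullback of the cospan into $X_1$, composite rule via Definition~\ref{def:Rcomp}, decomposition of the one-step derivation via adhesivity (resp.\ FPC lemmas in the SqPO case), and condition bookkeeping through the defining properties of $\Shift$ and $\Trans$ --- is essentially the same strategy as the proofs in those references, so your approach is sound and matches the (cited) argument. The one ingredient your sketch leaves invisible is $\cM$-effective unions: in the synthesis direction, the mediating morphism $N_{21}\hookrightarrow X_1$ from the pushout $N_{21}$ of $\mu_{21}$ to the cospan over $X_1$ must be shown to lie in $\cM$ for the pasted diagram to qualify as a direct derivation of the composite rule, and this is exactly where that axiom of Assumption~\ref{as:main} is consumed; similarly, in the SqPO analysis direction one needs composition/decomposition lemmas for FPCs, not just the van Kampen property. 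These are fillable details under Assumption~\ref{as:main}, but a complete write-up would have to make them explicit.
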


\section{Proofs}

\subsection{Proof of Theorem~\ref{thm:canrep}}\label{app:proofCanrep}

The statement of the theorem is equivalent to the following two properties:
    \begin{equation*}
        \begin{aligned}
        (i)\;&  &      \canRep{\bT}{\delta(R_{\mIO})}&=Id_{End_{\bR}(\hat{\bfC})}\\
        (ii)\;& &\forall R_1,R_2\in\LinEq{\bfC}:\quad
        \canRep{\bT}{\delta(R_{2})}\canRep{\bT}{\delta(R_{1})}&=
        \canRep{\bT}{\rap{\bT}{\delta(R_{2})}{\delta(R_{1})}}\,.
        \end{aligned}
    \end{equation*}
    By linearity, it suffices to verify these properties on an arbitrary basis vector $\ket{X}\in\hat{\bfC}$. For $(i)$, it suffices to verify that
    \[
        \canRep{\bT}{\delta(R_{\mIO})}\ket{X}=\sum_{m\in\MatchGT{\bT}{R_{\mIO}}{X}}\ket{R_{\mIO_m}(X)}=\ket{X}\,.
    \]
    Property $(ii)$ is a consequence of Theorem~\ref{thm:concur} (the Concurrency Theorem):
    \begin{align*}
        \canRep{\bT}{\delta(R_{2})}\canRep{\bT}{\delta(R_{1})}\ket{X}&=
        \sum_{m_1\in\MatchGT{\bT}{R_{1}}{X}}
        \sum_{m_2\in\MatchGT{\bT}{R_{2}}{R_{1_{m_1}}(X)}}\ket{R_{2_{m_2}}(R_{1_{m_1}}(X))}\\
        &=
        \sum_{\mu\in\MatchGT{\bT}{R_2}{R_1}}
        \sum_{m_{21}\in\MatchGT{\bT}{R_{2_{\mu}1}}{X}}
        \ket{R_{2_{\mu}1_{m_{21}}}(X)}\,.
    \end{align*}

\subsection{Proof of Theorem~\ref{thm:CTMCs}}
\label{sec:CTMCproofsApp}

\paragraph{Ad 1.:} It suffices to verify that direct derivations along a rule $R$ of the relevant form occurring in the two types of observables from any object $X$ satisfy $R_m(X)\cong X$. But this follows directly from the respective definitions of direct derivations.

\paragraph{Ad 2. \& 3.:} It again suffices to verify these properties on basis elements $\ket{X}$ of $\hat{\bfC}$, and for generic $R\in\LinEq{\bfC}$. By definition,
\begin{equation}
\bra{}\canRep{\bT}{\delta(R)}\ket{X}=\sum_{m\in \MatchGT{\bT}{R}{X}}\underbrace{\braket{}{R_m(X)}}_{=1_{\bR}}
=\vert \MatchGT{\bT}{R}{X}\vert\,.
\end{equation}
In both cases of semantics, a candidate match of $R$ into $X$ must satisfy the application condition. In the DPO case, in addition the relevant pushout complement must exist. Combining these facts allows to verify the formulae for $\jcOp{.}$.

\paragraph{Ad~4.:} The proof is straightforward generalization of the corresponding statement for the case of rewriting rules without conditions~\cite{bp2019-ext,nbSqPO2019}. Following standard continuous-time Markov chain (CTMC) theory~\cite{norris}, one may verify that the linear operator $\cH$ has a strictly negative coefficient diagonal contribution $\jcOp{H}$, a non-negative coefficient off-diagonal contribution $H$, thus $\cH$ satisfies $\bra{}\cH=0$. Since in addition a given $X\in\obj{\bfC}_{\cong}$ may be rewritten via direct derivations along the rules of the transition set only in finitely many ways, in summary $\cH$ fulfills all requirements to qualify as a conservative and stable $Q$-matrix (i.e.\ an infinitesimal generator) of a CTMC (cf.\ \cite{nbSqPO2019} for further  details).

\section{Details on the symbolic solution to the observable average counts in Example~\ref{ex:ugModel}}\label{app:se}

The ODE system of Example~\ref{ex:ugModel} may be solved in closed form as follows:
\begin{equation}
\begin{aligned}
\langle O_{\bullet}\rangle(t)&=\tfrac{\nu_{+}}{\nu_{-}} \left(1-e^{-t \nu_{-}}\right)\\
\langle O_{\bullet\vert\bullet}\rangle(t)&=\tfrac{\nu_{+}^2 e^{-\alpha  t} }{2 \alpha  \beta  \lambda  \nu_{-}^2}\left(\alpha  \beta  \varepsilon_{-} e^{\lambda  t}+2 \varepsilon_{+} \nu_{-}^2-2 \alpha  \kappa  \lambda  e^{\beta  t}+\beta  \lambda  \omega  e^{\alpha  t}\right)\\
\langle O_{\bullet\!-\!\bullet}\rangle(t)&=\tfrac{\varepsilon_{+} \nu_{+}^2 e^{-\alpha  t}}{2 \alpha  \beta  \lambda  \nu_{-}^2}\left(\alpha  \beta  \
e^{\lambda  t}-2 \alpha  \lambda  e^{\beta  t}+\beta  \lambda  \
e^{\alpha  t}-2 \nu_{-}^2\right)\\
\alpha&=\varepsilon_{-}+\varepsilon_{+}+2 \nu_ {-}\,,\;
 \beta=\varepsilon_{-}+\varepsilon_{+}+\nu_ {-}\\
\kappa&=\varepsilon_{-}+\nu_ {-}\,,\;
\lambda=\varepsilon_{-}+\varepsilon_{+}\,,\;
\omega=\varepsilon_{-}+2 \nu_ {-}\,.
\end{aligned}
\end{equation}
In particular, one may provide asymptotic formulae for $t\to\infty$:
\begin{equation}
\begin{aligned}
\langle O_{\bullet}\rangle(t)&\xrightarrow{t\to\infty}\tfrac{\nu_{+}}{\nu_{-}}\\
\langle O_{\bullet\vert\bullet}\rangle(t)&\xrightarrow{t\to\infty}
\tfrac{\nu_ {+}^2 (\varepsilon_{-}+2 \nu_{-})}{2 \nu_ {-}^2 (\varepsilon_{-}+\varepsilon_{+}+2 \nu_{-})}\\
\langle O_{\bullet\!-\!\bullet}\rangle(t)&\xrightarrow{t\to\infty}\tfrac{\varepsilon_{+} \nu_ {+}^2}{2 \nu_ {-}^2 (\varepsilon_{-}+\varepsilon_{+}+2 \nu_{-})}\,.
\end{aligned}
\end{equation}

\section{Technical details of typesetting the \MOD{} example}\label{app:ME}

For the interested readers, the following code may be used in either a standalone instance or via the \href{https://cheminf.imada.sdu.dk/mod/}{live playground} of \MOD{}~\cite{Andersen_2016} in order to reproduce the graphics for the Meisenheimer-2-3-rearrangement example of a organo-chemical reaction given in the main text. Note that since \MOD{} employs the traditional ``left-to-right'' convention for rules, the input and output patterns are given as ``right'' and ``left'', respectively.
\begin{python}
# Meisenheimer-2-3 rearrangement:
meisenheimer = ruleGMLString("""rule [
	left [
		edge [ source 1 target 2 label "-" ]
		edge [ source 2 target 3 label "=" ]
		edge [ source 1 target 5 label "-" ]
		node [ id 4 label "N" ]
		node [ id 5 label "O" ]
	]
	context [
		node [ id 1 label "C" ]
		node [ id 2 label "C" ]
		node [ id 3 label "C" ]
		edge [ source 4 target 5 label "-" ]
	]
	right [
		edge [ source 1 target 2 label "=" ]
		edge [ source 2 target 3 label "-" ]
		edge [ source 3 target 4 label "-" ]
		node [ id 4 label "N+" ]
		node [ id 5 label "O-" ]
	]
]""")
# Printing of the rule:
meisenheimer.print()
\end{python}


\begin{thebibliography}{33}
\providecommand{\url}[1]{\texttt{#1}}
\providecommand{\urlprefix}{URL }
\providecommand{\doi}[1]{https://doi.org/#1}

\bibitem{Andersen_2016}
Andersen, J.L., Flamm, C., Merkle, D., Stadler, P.F.: {A Software Package for
  Chemically Inspired Graph Transformation}. In: Graph Transformation, pp.
  73--88. Springer International Publishing (2016).
  \doi{10.1007/978-3-319-40530-8_5}

\bibitem{Andersen_2017}
Andersen, J.L., Flamm, C., Merkle, D., Stadler, P.F.: {An intermediate level of
  abstraction for computational systems chemistry}. Philosophical Transactions
  of the Royal Society A: Mathematical, Physical and Engineering Sciences
  \textbf{375}(2109),  20160354 (2017). \doi{10.1098/rsta.2016.0354}

\bibitem{andersen2018rule}
Andersen, J.L., Flamm, C., Merkle, D., Stadler, P.F.: {Rule composition in
  graph transformation models of chemical reactions}. Match  \textbf{80}(3),
  661--704 (2018)

\bibitem{nbSqPO2019}
Behr, N.: {Sesqui-Pushout Rewriting: Concurrency, Associativity and Rule
  Algebra Framework}. In: Echahed, R., Plump, D. (eds.) {Proceedings of
  theTenth International Workshop on Graph Computation Models
  (\href{http://gcm2019.imag.fr}{GCM 2019}) in Eindhoven, The Netherlands}.
  Electronic Proceedings in Theoretical Computer Science, vol.~309, pp. 23--52.
  Open Publishing Association (2019). \doi{10.4204/eptcs.309.2}

\bibitem{behr2019tracelets}
Behr, N.: {Tracelets and Tracelet Analysis Of Compositional Rewriting Systems
  (accepted for \href{http://www.cs.ox.ac.uk/ACT2019/}{ACT 2019} in Oxford)}.
  \href{https://arxiv.org/abs/1904.12829}{arXiv:1904.12829}  (2019)

\bibitem{bdg2016}
Behr, N., Danos, V., Garnier, I.: Stochastic mechanics of graph rewriting. In:
  Proceedings of the 31st Annual {ACM}/{IEEE} Symposium on Logic in Computer
  Science - {LICS} {'}16. {ACM} Press (2016). \doi{10.1145/2933575.2934537}

\bibitem{bdg2019}
Behr, N., Danos, V., Garnier, I.: {Combinatorial Conversion and Moment
  Bisimulation for Stochastic Rewriting Systems}.
  \href{http://arxiv.org/abs/1904.07313}{arXiv:1904.07313}  (2019)

\bibitem{behrRaSiR}
Behr, N., Krivine, J.: {Compositionality of Rewriting Rules with Conditions}.
  \href{https://arxiv.org/abs/1904.09322}{arXiv:1904.09322}  (2019)

\bibitem{bp2018}
Behr, N., Sobocinski, P.: {Rule Algebras for Adhesive Categories}. In: Ghica,
  D., Jung, A. (eds.) 27th EACSL Annual Conference on Computer Science Logic
  (CSL 2018). Leibniz International Proceedings in Informatics (LIPIcs),
  vol.~119, pp. 11:1--11:21. Schloss Dagstuhl--Leibniz-Zentrum fuer Informatik,
  Dagstuhl, Germany (2018). \doi{10.4230/LIPIcs.CSL.2018.11}

\bibitem{bp2019-ext}
Behr, N., Sobocinski, P.: {Rule Algebras for Adhesive Categories (invited
  extended journal version)}.
  \href{https://arxiv.org/abs/1807.00785}{arXiv:1807.00785}  (2019)

\bibitem{Boutillier:2018aa}
Boutillier, P., Maasha, M., Li, X., Medina-Abarca, H.F., Krivine, J., Feret,
  J., Cristescu, I., Forbes, A.G., Fontana, W.: The kappa platform for
  rule-based modeling. Bioinformatics  \textbf{34}(13),  i583--i592 (2018).
  \doi{10.1093/bioinformatics/bty272}

\bibitem{Corradini_2006}
Corradini, A., Heindel, T., Hermann, F., K{\"o}nig, B.: {Sesqui-Pushout
  Rewriting}. In: Corradini, A., Ehrig, H., Montanari, U., Ribeiro, L.,
  Rozenberg, G. (eds.) Graph Transformations. Lecture Notes in Computer
  Science, vol.~4178, pp. 30--45. Springer Berlin Heidelberg, Berlin,
  Heidelberg (2006)

\bibitem{danos2008rule}
Danos, V., Feret, J., Fontana, W., Harmer, R., Krivine, J.: Rule-based
  modelling, symmetries, refinements. In: Fisher, J. (ed.) Formal Methods in
  Systems Biology. pp. 103--122. Springer Berlin Heidelberg, Berlin, Heidelberg
  (2008). \doi{10.1007/978-3-540-68413-8_8}

\bibitem{Danos_2010}
Danos, V., Feret, J., Fontana, W., Harmer, R., Krivine, J.: Abstracting the
  differential semantics of rule-based models: Exact and automated model
  reduction. In: 2010 25th Annual {IEEE} Symposium on Logic in Computer
  Science. {IEEE} (2010). \doi{10.1109/lics.2010.44}

\bibitem{Danos_2014}
Danos, V., Heckel, R., Sobocinski, P.: {Transformation and Refinement of Rigid
  Structures}. In: Giese~H., K.B. (ed.) Graph Transformation (ICGT 2014).
  Lecture Notes in Computer Science, vol.~8571, pp. 146--160. Springer
  International Publishing (2014). \doi{10.1007/978-3-319-09108-2_10}

\bibitem{danos2015moment}
Danos, V., Heindel, T., Honorato-Zimmer, R., Stucki, S.: Moment semantics for
  reversible rule-based systems. In: International Conference on Reversible
  Computation. pp. 3--26. Springer (2015). \doi{10.1007/978-3-319-20860-2_1}

\bibitem{danos2004formal}
Danos, V., Laneve, C.: Formal molecular biology. Theoretical Computer Science
  \textbf{325}(1),  69--110 (2004). \doi{10.1016/j.tcs.2004.03.065}

\bibitem{danos2004computational}
Danos, V., Schachter, V.: {Computational Methods in Systems Biology}. In:
  Conference proceedings CMSB. p.~91. Springer (2004). \doi{10.1007/b107287}

\bibitem{Delbr_ck_1940}
Delbr{\"u}ck, M.: {Statistical Fluctuations in Autocatalytic Reactions}. The
  Journal of Chemical Physics  \textbf{8}(1),  120--124 (1940).
  \doi{10.1063/1.1750549}

\bibitem{ehrig:2006fund}
Ehrig, H., Ehrig, K., Prange, U., Taentzer, G.: {Fundamentals of Algebraic
  Graph Transformation}. Monographs in Theoretical Computer Science (An EATCS
  Series)  (2006). \doi{10.1007/3-540-31188-2}

\bibitem{ehrig2012m}
Ehrig, H., Golas, U., Habel, A., Lambers, L., Orejas, F.:
  {$\mathcal{M}$-Adhesive Transformation Systems with Nested Application
  Conditions. Part 2: Embedding, Critical Pairs and Local Confluence}.
  Fundamenta Informaticae  \textbf{118}(1-2),  35--63 (2012).
  \doi{10.3233/FI-2012-705}

\bibitem{ehrig2014mathcal}
Ehrig, H., Golas, U., Habel, A., Lambers, L., Orejas, F.:
  {$\mathcal{M}$-adhesive transformation systems with nested application
  conditions. Part 1: parallelism, concurrency and amalgamation}. Mathematical
  Structures in Computer Science  \textbf{24}(04) (2014).
  \doi{10.1017/s0960129512000357}

\bibitem{ehrig2010categorical}
Ehrig, H., Golas, U., Hermann, F.: {Categorical frameworks for graph
  transformation and HLR systems based on the DPO approach}. Bulletin of the
  EATCS (102),  111--121 (2010)

\bibitem{ehrig2004adhesive}
Ehrig, H., Habel, A., Padberg, J., Prange, U.: {Adhesive High-Level Replacement
  Categories and Systems}. In: Lecture Notes in Computer Science, pp. 144--160.
  Springer Berlin Heidelberg (2004). \doi{10.1007/978-3-540-30203-2_12}

\bibitem{ferethal00975861}
Feret, J., Koeppl, H., Petrov, T.: {Stochastic fragments: A framework for the
  exact reduction of the stochastic semantics of rule-based models}.
  {International Journal of Software and Informatics (IJSI)}  \textbf{7}(4),
  527--604 (2014)

\bibitem{GABRIEL_2014}
Gabriel, K., Braatz, B., Ehrig, H., Golas, U.: Finitary $\mathcal{M}$-adhesive
  categories. Mathematical Structures in Computer Science  \textbf{24}(04)
  (2014). \doi{10.1017/S0960129512000321}

\bibitem{habel2009correctness}
Habel, A., Pennemann, K.H.: Correctness of high-level transformation systems
  relative to nested conditions. Mathematical Structures in Computer Science
  \textbf{19}(02), ~245 (2009). \doi{10.1017/s0960129508007202}

\bibitem{Habel_2012}
Habel, A., Plump, D.: {$\mathcal{M}, \mathcal{N}$ -Adhesive Transformation
  Systems}. In: Ehrig, H., Engels, G., Kreowski, H., Rozenberg, G. (eds.) Graph
  Transformations (ICGT 2012). Lecture Notes in Computer Science, vol.~7562,
  pp. 218--233. Springer Berlin Heidelberg (2012).
  \doi{10.1007/978-3-642-33654-6_15}

\bibitem{Harmer_2010}
Harmer, R., Danos, V., Feret, J., Krivine, J., Fontana, W.: Intrinsic
  information carriers in combinatorial dynamical systems. Chaos: An
  Interdisciplinary Journal of Nonlinear Science  \textbf{20}(3),  037108
  (2010). \doi{10.1063/1.3491100}

\bibitem{lack2005adhesive}
Lack, S., Soboci{\'{n}}ski, P.: {Adhesive and quasiadhesive categories}.
  {RAIRO} - Theoretical Informatics and Applications  \textbf{39}(3),  511--545
  (2005). \doi{10.1051/ita:2005028}

\bibitem{March2001}
March, J.: {March's Advanced Organic Chemistry:~ Reactions,
  Mechanisms and Structure}, vol.~5. American Chemical Society ({ACS}) (2001).
  \doi{10.1021/op010074a}

\bibitem{norris}
Norris, J.R.: {Markov Chains}. Cambridge University Press (1997).
  \doi{10.1017/cbo9780511810633}

\bibitem{padberg2017towards}
Padberg, J.: {Towards M-Adhesive Categories based on Coalgebras and Comma
  Categories}. \href{http://arxiv.org/abs/1702.04650}{arXiv:1702.04650}  (2017)

\end{thebibliography}
\end{document}